\newcommand{\lyxaddress}[1]{
\par {\raggedright #1
\vspace{1.4em}
\noindent\par}
}
  \theoremstyle{definition}
  \newtheorem{defn}{\protect\definitionname}
  \theoremstyle{plain}
  \newtheorem{prop}{\protect\propositionname}
  \theoremstyle{plain}
  \newtheorem{lem}{\protect\lemmaname}
\def\frontmatter@abstractheading{}
\date{}
  \providecommand{\definitionname}{Definition}
  \providecommand{\lemmaname}{Lemma}
  \providecommand{\propositionname}{Proposition}
\begin{document}

\title{Quantum Models for Psychological Measurements: \\
An Unsolved Problem}

\author{Andrei Khrennikov\textsuperscript{1}, Irina Basieva\textsuperscript{1},
Ehtibar N. Dzhafarov\textsuperscript{2}%
\thanks{Corresponding author: Department of Psychological Sciences, Purdue
University, West Lafayette, IN, USA; ehtibar@purdue.edu%
}, and \textcolor{black}{Jerome R. Busemeyer}\textsuperscript{3}}

\maketitle

\lyxaddress{\begin{center}
\textsuperscript{1}Linnaeus University, Sweden, \textsuperscript{2}Purdue
University, USA, \textsuperscript{3}\textcolor{black}{Indiana University,
USA}
\par\end{center}}
\begin{abstract}
There has been a strong recent interest in applying quantum mechanics
(QM) outside physics, including in cognitive science. We analyze the
applicability of QM to two basic properties in opinion polling. The
first property (response replicability) is that, for a large class
of questions, a response to a given question is expected to be repeated
if the question is posed again, irrespective of whether another question
is asked and answered in between. The second property (question order
effect) is that the response probabilities frequently depend on the
order in which the questions are asked. Whenever these two properties
occur together, it poses a problem for QM. The conventional QM with
Hermitian operators can handle response replicability, but only in
the way incompatible with the question order effect. In the generalization
of QM known as theory of positive-operator-valued measures (POVMs),
in order to account for response replicability, the POVMs involved
must be conventional operators. Although these problems are not unique
to QM and also challenge conventional cognitive theories, they stand
out as important unresolved problems for the application of QM to
cognition . Either some new principles are needed to determine the
bounds of applicability of QM to cognition, or quantum formalisms
more general than POVMs are needed.

\textsc{Keywords}: decision making, opinion polling, psychophyscis,
quantum cognition, quantum mechanics, question order effect, response
replicability, sequential effects. 
\end{abstract}

\section{Introduction}

Quantum mechanics (QM) is the mathematical formalism of quantum physics.
(Sometimes the two are considered synonymous, in which case what we
call here QM would have to be called ``mathematical formalism of
QM.'') However, QM has recently begun to be used in various domains
outside of physics, e.g., in biology and economics \cite{AS2}--\cite{POLINA},
as well as in cognitive science \cite{K2}--\cite{Dzhafarov4}.
See recently published monographs \cite{Jerome5} and \cite{H2} for
overviews, as well as the recent target article in \emph{Brain and
Behavioral Sciences} \cite{POTZ} with ensuing commentaries and rejoinders.
There is one obvious similarity between cognitive science and quantum
physics: both deal with observations that are fundamentally probabilistic.
This similarity makes the use of QM in cognitive science plausible,
as QM is specifically designed to deal with random variables. Here,
we analyze the applicability of QM in opinion-polling, and compare
it to psychophysical judgments. 

On a very general level, QM accounts for the probability distributions
of measurement results using two kinds of entities, called \emph{observables}
$A$ and \emph{states} $\psi$ (of the system on which the measurements
are made). Let us assume that measurements are performed in a series
of consecutive trials numbered $1,2,\ldots$. In each trial $t$ the
experimenter decides what measurement to make (e.g., what question
to ask), and this amounts to choosing an observable $A$. Despite
its name, the latter is not observable per se, in the colloquial sense
of the word, but it is associated with a certain set of values $v\left(A\right)$,
which are the possible results one can observe by measuring $A$.
In a psychological experiment these are the responses that a participant
is allowed to give, such as \emph{Yes} and \emph{No}.

The probabilities of these outcomes in trial $t$ (conditioned on
all the previous measurements and their outcomes) are computed as
some function of the observable $A$ and of the state $\psi^{\left(t\right)}$
in which the system (a particle in quantum physics, or a participant
in psychology) is at the beginning of trial $t$, 
\begin{equation}
\Pr\left[v\left(A\right)=v\textnormal{ in trial }t\,\vert\,\mbox{measurements in trials }1,\ldots,t-1\right]=F\left(\psi^{\left(t\right)},A,v\right).\label{eq: very general v}
\end{equation}
This measurement changes the state of the system, so that at the end
of trial $t$ the state is $\psi^{\left(t+1\right)}$, generally different
from $\psi^{\left(t\right)}$. The change $\psi^{\left(t\right)}\rightarrow\psi^{\left(t+1\right)}$
depends on the observable $A$, the state $\psi^{\left(t\right)}$,
and the value $v\left(A\right)$ observed in trial $t$, 
\begin{equation}
\psi^{\left(t+1\right)}=G\left(\psi^{\left(t\right)},A,v\right).\label{eq: very general rho}
\end{equation}

On this level of generality, a psychologist will easily recognize
in (\ref{eq: very general v})-(\ref{eq: very general rho}) a probabilistic
version of the time-honored Stimulus-Organism-Response (S-O-R) scheme
for explaining behavior \cite{WOO}. This scheme involves stimuli
(corresponding to $A$), responses (corresponding to $v$), and internal
states (corresponding to $\psi$). It does not matter whether one
simply identifies $A$ with a stimulus, or interprets $A$ as a kind
of internal representation thereof, while interpreting the stimulus
itself as part of the measurement procedure (together with the instructions
and experimental set-up, that are usually fixed for the entire sequence
of trials). What is important is that the stimulus determines the
observable $A$ uniquely, so that if the same stimulus is presented
in two different trials $t$ and $t'$, one can assume that $A$ is
the same in both of them.%
\footnote{This approach is adopted here to unify it formally with psychological
measurements and quantum measurements in physics (such as spin measurements,
mentioned below). However, one of the authors (JRB) prefers another
approach, adopted in Ref. \cite{Jerome5}, in which stimuli are mapped
into different states and the observable is fixed by the question
about the stimuli. The two approaches coincide with regard to the
measurement issues addressed in this article, and therefore the analysis
is not affected. The relation between the two approaches is outside
the scope of this paper. %
} 

\textcolor{black}{The state $\psi^{\left(t+1\right)}$ determined
by (}\ref{eq: very general rho}\textcolor{black}{) may remain unchanged
between the response $v$ terminating trial $t$ and the presentation
of (the stimulus corresponding to) the new observable that initiates
trial $t+1$. In some applications this interval can indeed be negligibly
small or even zero, but if it is not, one has to allow for the evolution
of $\psi^{\left(t+1\right)}$ within it. In QM, the ``pure'' evolution
of the state (assuming no intervening inter-trial inputs) is described
by some function
\begin{equation}
\psi_{\Delta}^{\left(t+1\right)}=H\left(\psi^{\left(t+1\right)},\Delta\right),\label{eq: very general evolution}
\end{equation}
where $\Delta$ is the time interval between the recording of $v$
in trial $t$ and the observable in trial $t+1$.} This scheme is
somewhat simplistic: one could allow $H$ to depend, in addition to
the time interval $\Delta$, on the observable $A$ and the outcome
$v$ in trial $t$. We do not consider such complex inter-trial dynamics
schemes in this paper.

The reason we single out opinion-polling and compare it to psychophyscis
is that they exemplify two very different types of stimulus-response
relations.

In a typical opinion-polling experiment, a group of participants is
asked one question at a time, e.g., $a=$``Is Bill Clinton honest
and trustworthy?'' and $b=$``Is Al Gore honest and trustworthy?''
\cite{MOO}. The two questions, obviously, differ from each other
in many respects, none of which has anything to do with their content:
the words ``Clinton'' and ``Gore'' sound different, and the participants
know many aspects in which Clinton and Gore differ, besides their
honesty or dishonesty. Therefore, if a question, say, $b$, were presented
to a participant more than once, she would normally recognize that
it had already been asked, which in turn would compel her to repeat
it, unless she wants to contradict herself. One can think of situations
when the respondent can change her opinion, e.g., if another question
posed between two replications of the question provides new information
or reminds something forgotten. Thus, if the answer to the question
$a=$``Do you want to eat this chocolate bar?'' is Yes, and the
second question is $b=$``Do you want to lose weight?,'' the replications
of $a$ may very well elicit response No. It is even conceivable that
if one simply repeats the chocolate question twice, the person will
change her mind, as she may think the replication of the question
is intended to make her ``think again.'' In a wide class of situations,
however, changing one's response would be highly unexpected and even
bizarre (e.g., replace $a$ in the example above with ``Do you like
chocolate?''). We assume that the pairs of questions asked, e.g.,
in Moore's study \cite{MOO} are of this type.

In a typical psychophysical task, the stimuli used are identical in
all respects except for the property that a participant is asked to
judge. Consider a simple detection paradigm in which the observer
is presented one stimulus at a time, the stimulus being either $a$
(containing a signal to be detected) or $b$ (the ``empty'' stimulus,
in which the signal is absent). For instance, $a$ may be a tilted
line segment, and $b$ the same line segment but vertical, the tilt
(which is the signal to be detected) being too small for all answers
to be correct. Clearly, the participant in such an experiment cannot
first decide that the stimulus being presented now has already been
presented before, and that it has to be judged to be $a$ because
so it was before.

With this distinction in mind, however, the formalism (\ref{eq: very general v})-(\ref{eq: very general rho})-(\ref{eq: very general evolution})
can be equally applied to both types of situations. In both cases
$a$ is to be replaced with some observable $A$, and $b$ with some
observable $B$ (after which $a$ and $b$ per se can be forgotten).
The values of $A$ and $B$ are the possible responses one records.
In the psychophysical example, $v\left(A\right)$ and $v\left(B\right)$
each can attain one of two values: $1=$``I think the stimulus was
tilted'' or $0=$``I think the stimulus was vertical''. The psychophysical
analysis consists in identifying the hit-rate and false-alarm-rate
functions (conditioned on the previous stimuli and responses) 
\begin{equation}
\begin{array}{c}
\Pr\left[v\left(A\right)=1\textnormal{ in trial }t\,\vert\,\mbox{measurements in trials }1,\ldots,t-1\right]=F\left(\psi^{\left(t\right)},A,1\right),\\
\Pr\left[v\left(B\right)=1\textnormal{ in trial }t\,\vert\,\mbox{measurements in trials }1,\ldots,t-1\right]=F\left(\psi^{\left(t\right)},B,1\right).
\end{array}\label{eq: binary general}
\end{equation}
The learning (or sequential-effect) aspect of such analysis consists
in identifying the function 
\begin{equation}
\psi^{\left(t+1\right)}=G\left(\psi^{\left(t\right)},S,v\right),\; S\in\left\{ A,B\right\} ,v\in\left\{ 0,1\right\} ,\label{eq: learning binary general}
\end{equation}
combined with the ``pure'' inter-trial dynamics (\ref{eq: very general evolution}).

In the opinion-polling example (say, about Clinton's and Gore's honesty),
there are two hypothetical observables: $A$, corresponding to the
question $a=$``Is Bill Clinton honest?'', and $B$, corresponding
to the question $b=$``Is Al Gore honest?'', each observable having
two possible values, $0=$``Yes'' and $1=$``No''. The analysis,
formally, is precisely the same as above, except that one no longer
uses the terms ``hits'' and ``false alarms'' (because ``honesty''
is not a signal objectively present in one of the two politicians
and absent in another).

It is worth noting that in the opinion polling the observables $A,B$
are defined by the questions $a,b$ alone only because the allowable
responses (Yes or No) and the instructions (``Respond to this question'')
do not vary from one trial to another. If the allowable responses
varied (e.g., if they were Yes and No in some trials, and Yes, No,
and Not Sure in other trials), or if the instruction varied (say,
in some trials ``Respond as quickly as possible'', in other trials
``Think carefully and respond''), they would have also contributed
to the identification of the observables. Analogously, in our psychophysics
example, the observables are defined by stimuli alone because the
instruction to the participants (``Tell us whether the line you see
is tilted or vertical'') and the responses allowed (``Tilted''
and ``Vertical'') remain fixed throughout the successive trials. 

In quantum physics, a classical example falling within the same formal
scheme as the examples above is one involving measuring the spin of
a particle in a given direction. Let the experimenter choose one of
two possible directions, $a$ or $b$ (unit vectors in space along
which the experimenter sets a spin detector). If the particle is a
spin-$\nicefrac{1}{2}$ one, such as an electron, then the spin for
each direction chosen can have one of two possible values, $1=$``up''
or $0=$``down'' (we need not discuss the physical meaning of these
designations). These 1 and 0 are then the possible values of the observables
$A$ and $B$ one associates with the two directions, and the analysis
again consists in identifying the functions $F$, $G$, and $H$.

\section{A brief account of conventional QM\label{sec:A-brief-account}}

In QM, all entities operate in a \emph{Hilbert space}, a vector space
endowed with the operation of scalar product. The components of the
vectors are \emph{complex numbers}. We will assume that the Hilbert
spaces to be considered are $n$-dimensional ($n\geq2$), but the
generalization of all our considerations to infinite-dimensional spaces
is trivial. The scalar product of vectors $\psi,\phi$ is denoted
\[
\left\langle \psi,\phi\right\rangle =\sum_{i=1}^{n}x_{i}y_{i}^{*},
\]
where $x_{i}$ and $y_{i}$ are components of $\psi$ and $\phi$,
respectively, and the star indicates \emph{complex conjugation}: if
$c=a+ib$, then $c^{*}=a-ib$. The length of a vector $\phi$ is defined
as 
\[
\Vert\phi\Vert=\sqrt{\left\langle \phi,\phi\right\rangle }.
\]

Any observable $A$ in this $n$-dimensional version of QM is represented
by an $n\times n$ \emph{Hermitian matrix}.%
\footnote{Each matrix represents an \emph{operator}, in the sense that it transforms
a vector by which it is multiplied into another vector. For this reason,
the terms ``matrix'' and ``operator'' are treated as synonymous
(in a finite-dimensional Hilbert space).%
} This is a matrix with complex entries such that, for any $i,j\in\left\{ 1,\ldots,n\right\} $,
$a_{ij}=a_{ji}^{*}$. In particular, all diagonal entries of $A$
are real numbers. For $n=2$, an observable has the form 
\[
A=\left(\begin{array}{cc}
r & x-iy\\
x+iy & s
\end{array}\right),
\]
where $r,s,x,y$ are real numbers.

It is known from matrix algebra that any Hermitian matrix can be uniquely
decomposed as 
\begin{equation}
A=\sum_{i=1}^{k}v_{i}P_{i},\; k\leq n,\label{eq: spectral}
\end{equation}
where $v_{1},\ldots,v_{k}$ are pairwise distinct \emph{eigenvalues}
of $A$ (all real numbers), and $P_{i}$ are \emph{eigenprojectors}
($n\times n$\emph{ }Hermitian matrices whose eigenvalues are zeros
and ones). For any distinct $i,j\in\left\{ 1,\ldots,k\right\} $,
the eigenprojectors satisfy the conditions 
\begin{equation}
P_{i}^{2}=P_{i}\textnormal{ (idempotency), }P_{i}P_{j}=\mathbf{0}\textnormal{ (orthogonality)}.\label{eq: idempotency, orthogonality}
\end{equation}
Moreover, all eigenprojectors are \emph{positive semidefinite}: for
any nonzero vector $x$, $\langle P_{i}x,x\rangle\geq0$. Finally,
all eigenprojectors sum to the identity matrix, 
\begin{equation}
\sum_{i=1}^{k}P_{i}=I.
\end{equation}

In QM, the distinct eigenvalues $v_{1},\ldots,v_{k}$ are postulated
to form the set of all possible values $v\left(A\right)$. That is,
as a result of measuring $A$ in any given trial one always observes
one of the values $v_{1},\ldots,v_{k}$. For simplicity (and because
all our examples involve binary outcomes), in this paper we will only
deal with the observables $A$ that have two possible values $v\left(A\right)$,
denoted $0$ and $1$. This means that all our observables can be
presented as 
\begin{equation}
A=P_{1},\label{eq: binary observable}
\end{equation}
and 
\begin{equation}
P_{0}^{2}=P_{0},\; P_{1}^{2}=P_{1},\; P_{0}P_{1}=\mathbf{0},\; P_{0}+P_{1}=I.\label{eq: binary properties}
\end{equation}

Each eigenvalue $v$ (0 or 1) has its \emph{multiplicity} $1\leq d<n$.
This is the dimensionality of the \emph{eigenspace} $V$ associated
with $v$, which is the space spanning the $d$ pairwise orthogonal
\emph{eigenvectors} associated with $v$ (i.e., the space of all linear
combinations of these eigenvectors). Multiplication of $P_{v}$ by
any vector $x$ is the orthogonal projection of this vector into $V$.
If $d=1$, the eigenspace $V$ is the ray containing a unique unit-length\emph{
}eigenvector of $A$ corresponding to $v$. The eigenvalue $1-v$
has the multiplicity $n-d$, the dimensionality of the eigenspace
$V^{\bot}$ which is orthogonal to $V$. If both $d=1$ and $n-d=1$
(i.e., $n=2$), then $A$ is said to have a \emph{non-degenerate spectrum}.
In this paper we assume the spectra are generally \emph{degenerate}
($n\geq2$).

The eigenvalues $0,1$ of $A$ in a given trial generally cannot be
predicted, but one can predict the probabilities of their occurrence.
To compute these probabilities, QM uses the notion of a \emph{state}
of the system. In any given trial the state is unique, and it is represented
by a unit length \emph{state vector} $\psi$.%
\footnote{For simplicity, we assume throughout the paper that the system is
always in a \emph{pure state}. A more general \emph{mixed state} is
represented by a \emph{density matrix}, which is essentially a set
of up to $n$ distinct pure states occurring with some probabilities.
The same as with the assumption that $n$ is finite, the restriction
of our analysis to pure states is not critical. %
} If the system is in a state $\psi^{\left(t\right)}$ in trial $t$,
and the measurement is performed on the observable $A$, the probabilities
of the outcomes of this measurement are given by 
\begin{equation}
F\left(\psi^{\left(t\right)},A,v\right)=\Pr\left[v\left(A\right)=v\textnormal{ in trial }t\,\vert\,\mbox{measurements in trials }1,\ldots,t-1\right]=\langle P_{v}\psi^{\left(t\right)},\psi^{\left(t\right)}\rangle=\left\Vert P_{v}\psi^{\left(t\right)}\right\Vert ^{2},\label{eq: probabilities computed}
\end{equation}
where $v=0,1$. Note that these probabilities are conditioned on the
previous observables, in trials $1,\ldots,t-1$, and their observed
values. 

Given that the observed outcome in trial $t$ is $v$, the state $\psi^{\left(t\right)}$
changes into $\psi^{\left(t+1\right)}$ according to 
\begin{equation}
G\left(\psi^{\left(t\right)},A,v\right)=\frac{P_{v}\psi^{\left(t\right)}}{\Vert P_{v}\psi^{\left(t\right)}\Vert}=\psi^{\left(t+1\right)}.\label{eq: projection postulate}
\end{equation}
This equation represents the von Neumann-Lüders \emph{projection postulate}
of QM. The denominator is nonzero because it is the square root of
$\Pr\left[v\left(A\right)=v\textnormal{ in trial }t\right]$, and
(\ref{eq: projection postulate}) is predicated on $v$ having been
observed. The geometric meaning of $G\left(\psi^{\left(t\right)},A,v\right)$
is that $\psi^{\left(t\right)}$ is orthogonally projected by $P_{v}$
into the eigenspace $V$ and then normalized to unit length.

Finally, the inter-trial dynamics of the state vector in QM (between
$v$ and the next observable, separated by interval $\Delta$) is
represented by the \emph{unitary evolution} formula
\begin{equation}
H\left(\psi^{\left(t+1\right)},\Delta\right)=U_{\Delta}\psi^{\left(t+1\right)}=\psi_{\Delta}^{\left(t+1\right)},\label{eq: unitary evolution}
\end{equation}
where $U_{\Delta}$ is a \emph{unitary matrix}, defined by the property
\begin{equation}
U_{\Delta}^{-1}=U_{\Delta}^{\dagger}.
\end{equation}
Here, $U_{\Delta}^{-1}$ is the \emph{matrix inverse} ($U_{\Delta}^{-1}U_{\Delta}=U_{\Delta}U_{\Delta}^{-1}=I$),
and \textcolor{black}{$U_{\Delta}^{\dagger}$ is the }\textcolor{black}{\emph{conjugate
transpose}}\textcolor{black}{{} of $U_{\Delta}$, obtained by transposing
$U_{\Delta}$ and replacing each entry $x+iy$ in it with its complex
conjugate $x-iy$. }The unitary matrix $U_{\Delta}$ should also be
made a function of inter-trial variations in the environment (such
as variations in overall noise level, or other participants' responses)
if they are non-negligible. The identity matrix $I$ is a unitary
matrix: if $U_{\Delta}=I$, (\ref{eq: unitary evolution}) describes
\emph{no inter-trial dynamics}, with the state remaining the same
through the interval $\Delta$. 

Note that the eigenvalue $v$ itself does not enter the computations.
This justifies treating it as merely a label for the eigenprojectors
and eigenspaces (so instead of $0,1$ we could use any other labels).

\section{Measurement sequences, evolution (in)effectiveness, and stability}

In this section we introduce terminology and preliminary considerations
needed in the subsequent analysis. Throughout the paper we will make
use of the following way of describing measurements performed in successive
trials: 
\begin{equation}
\left(A_{1},v_{1},p_{1}\right)\rightarrow\ldots\rightarrow\left(A_{r},v_{r},p_{r}\right).
\end{equation}
We call this a \emph{measurement sequence}. Each triple in the sequence
consists of an observable $A$ being measured, an outcome $v$ recorded
(0 or 1), and its conditional probability $p$. The probability is
conditioned on the observables measured and the outcomes recorded
in the previous trials of the same measurement sequence. Thus, 
\begin{equation}
\begin{array}{c}
p_{1}=\Pr\left[v\left(A_{1}\right)=v_{1}\textnormal{ in trial }1\right],\\
p_{2}=\Pr\left[v\left(A_{2}\right)=v_{2}\textnormal{ in trial }2\:\vert\: v\left(A_{1}\right)=v_{1}\textnormal{ in trial }1\right],\\
p_{2}=\Pr\left[v\left(A_{3}\right)=v_{3}\textnormal{ in trial }3\:\vert\: v\left(A_{1}\right)=v_{1}\textnormal{ in trial }1,\textnormal{ and }v\left(A_{2}\right)=v_{2}\textnormal{ in trial }2\right],\\
\ldots
\end{array}
\end{equation}
As we assume that the outcomes $v_{1},v_{2},\ldots$ in a measurement
sequence have been recorded, all probabilities $p_{1},p_{2},\ldots$
are positive if the measurement sequence exists. Recall that the observables
$A_{1},A_{2},\ldots$ in a sequence are uniquely determined by the
measurement procedures applied, $a_{1},a_{2},\ldots$, and that the
outcomes (0 or 1) are eigenvalues of these observables.

Consider now the two-trial measurement sequence
\begin{equation}
\left(A,v,p\right)\rightarrow\left(B,w,q\right),\label{eq: two trials}
\end{equation}
where $v,w\in\left\{ 0,1\right\} $. Let $A$ have the eigenprojector
matrices $P_{0},P_{1}$, and $B$ have the eigenprojector matrices
$Q_{0},Q_{1}$. If the initial state of the system is $\psi=\psi^{\left(1\right)}$,
we have 
\begin{equation}
p=\Vert P_{v}\psi\Vert^{2},
\end{equation}
and $\psi^{\left(1\right)}$ transforms into
\begin{equation}
\frac{P_{v}\psi}{\Vert P_{v}\psi\Vert}=\psi^{\left(2\right)}.
\end{equation}
Assuming an interval $\Delta$ between the two trials, $\psi^{\left(2\right)}$
evolves into 
\begin{equation}
\psi_{\Delta}^{\left(2\right)}=U_{\Delta}\psi^{\left(2\right)}=\frac{U_{\Delta}P_{v}\psi}{\Vert P_{v}\psi\Vert}.
\end{equation}
This is the state vector paired with $B$ in the next measurement,
yielding, with the help of some algebra,
\begin{equation}
q=\Vert Q_{w}\psi_{\Delta}^{\left(2\right)}\Vert^{2}=\frac{\Vert Q_{w}U_{\Delta}P_{v}\psi\Vert^{2}}{\Vert P_{v}\psi\Vert^{2}}=\frac{\Vert\left(U_{\Delta}^{\dagger}Q_{w}U_{\Delta}\right)P_{v}\psi\Vert^{2}}{\Vert P_{v}\psi\Vert^{2}}.\label{eq: q with dynamics}
\end{equation}
As a special case $U_{\Delta}$ can be the identity matrix (no inter-trial
changes in the state vector), and then we have
\begin{equation}
q=\frac{\Vert Q_{w}P_{v}\psi\Vert^{2}}{\Vert P_{v}\psi\Vert^{2}},\label{eq: q simple}
\end{equation}
because in this case
\begin{equation}
\left(U_{\Delta}^{\dagger}Q_{w}U_{\Delta}\right)=Q_{w}.\label{eq: inefficiency}
\end{equation}

It is possible, however, that the latter equality holds even if $U_{\Delta}^{\dagger}$
is not the identity matrix. In fact it is easy to see that this happens
if and only if $U_{\Delta}$ and $B$ commute, i.e.,
\begin{equation}
U_{\Delta}B=BU_{\Delta}.\label{eq: commutation with evolution}
\end{equation}
For the proof of this, see Lemma \ref{lem: inefficient} in Appendix. 

We will say that 
\begin{defn}
A unitary operator $U_{\Delta}$ is \emph{ineffective for an observable}
$B$ if the two operators commute. 
\end{defn}
The justification for this terminology should be transparent: due
to (\ref{eq: inefficiency}), in the computation (\ref{eq: q with dynamics})
of the probability $q$ the evolution operator can be ignored, yielding
(\ref{eq: q simple}). The notion of inefficiency of the evolution
operator will play an important role in the analysis of repeated measurements
below.

Our next consideration regards the set of all possible values of the
initial state vector $\psi$ for a given measurement sequence. In
the applications of QM in physics, this set is assumed to cover the
entire Hilbert space in which they are defined. We are not justified
to adopt this assumption in psychology, it would be too strong: one
could argue that the initial states in a given experiment may be forbidden
to attain values within certain areas of the Hilbert space. At the
same time, it seems even less reasonable to allow for the possibility
that the initial state for a given measurement sequence is always
fixed at one particular value. The initial state vectors, as follows
from both the QM principles and common sense, should depend on the
system's history prior to the given experiment, and this should create
some variability from one replication of this experiment to another.
This is important, because, given a set of observables, specially
chosen initial state vectors may exhibit ``atypical'' behaviors,
those that would disappear if the state vector were modified even
slightly.

This leads us to adopting the following\medskip{}

\textbf{Stability Principle:} \emph{If $\psi$ is a possible initial
state vector for a given measurement sequence in an $n$-dimensional
Hilbert space, then there is an open ball $B_{r}\left(\psi\right)$
centered at $\psi$ with a sufficiently small radius $r$, such that
any vector $\psi+\delta$ in this ball,}\textcolor{black}{{} }\textcolor{black}{\emph{normalized
by its length $\left\Vert \psi+\delta\right\Vert $,}}\emph{ is also
a possible initial state vector for this measurement sequence.}\medskip{}

\noindent We will say that 
\begin{defn}
A property of a measurement sequence is (or holds)\emph{ stable} for
an\emph{ }initial vector $\psi$, if it holds for all state vectors
within a sufficiently small $B_{r}\left(\psi\right)$. 
\end{defn}
Almost all our propositions below are proved under this stability
clause, specifically by using the reasoning presented in Lemma \ref{lem: stability}
in Appendix.

\section{Consequences for ``a\textmd{\normalsize{$\rightarrow$}}a{\normalsize{''-type
measurement sequences}}}

Using the definitions and the language just introduced, we will now
focus on the consequences of (\ref{eq: probabilities computed})-(\ref{eq: projection postulate})-(\ref{eq: unitary evolution})
for repeated measurements with repeated responses,
\begin{equation}
\left(A,v,p\right)\rightarrow\left(A,v,p'\right).\label{eq: repeated}
\end{equation}
Consider an opinion-polling experiment, with questions like $a=$``Is
Bill Clinton trustworthy?'' \cite{MOO}. As argued for in Introduction,
if the same question is posed twice, $a\rightarrow a$, a typical
respondent, who perhaps hesitated when choosing the response the first
time she was asked $a$, would now certainly be expected to repeat
it, perhaps with some display of surprise at being asked the question
she has just answered. This may not be true for all possible questions,
but it is certainly true for a vast class thereof. Let us formulate
this as

\paragraph{Property 1\label{Fact zero}}

\noindent \emph{For some nonempty class of questions, if a question
is repeated twice in successive trials (separated by one of a broad
range of inter-trial intervals), the response to it will also be repeated}.\medskip{}

If a question $a$ within the scope of this property is represented
by an observable $A$, we are dealing with the measurement sequence
(\ref{eq: repeated}) in which $p'=1$. Such a measurement sequence
does not disagree with the formulas (\ref{eq: probabilities computed})-(\ref{eq: projection postulate})-(\ref{eq: unitary evolution}),
and in fact is even predicted by them if the intervening inter-trial
evolution of the state vector is assumed to be ineffective. Indeed,
(\ref{eq: q with dynamics}) for the measurement sequence (\ref{eq: repeated})
acquires the form 
\begin{equation}
p'=\frac{\Vert\left(U_{\Delta}^{\dagger}P_{v}U_{\Delta}\right)P_{v}\psi\Vert^{2}}{\Vert P_{v}\psi\Vert^{2}},\label{eq: p' ineffiecient evolution}
\end{equation}
and the inefficiency of $U_{\Delta}$ for $A$ implies
\begin{equation}
p'=\frac{\Vert P_{v}^{2}\psi\Vert^{2}}{\Vert P_{v}\psi\Vert^{2}}=1,\label{eq: p' again}
\end{equation}
because $P_{v}^{2}=P_{v}$ holds for all projection operators.

This is easy to understand informally. An outcome $v$ observed in
the first measurement, $\left(A,v,p\right)$, is associated with an
eigenspace $V$. The measurement orthogonally projects the state vector
$\psi=\psi^{\left(1\right)}$ into this eigenspace, and this projection
is normalized to become the new state $\psi^{\left(2\right)}$. The
application of the same measurement to $\psi^{\left(2\right)}$ orthogonally
projects it into $V$ again, but since $\psi^{\left(2\right)}$ is
already in $V$, it does not change. The squared length of the projection
therefore is $1$, and this is what the probability $p'$ is.

As it turns out, under the stability principle, effective inter-trial
evolution is in fact excluded for the observables representing the
questions falling within the scope of Property 1. In other words,
for all such questions, the unitary operators $U_{\Delta}$ can be
ignored in all probability computations. 

Let us say that 
\begin{defn}
An observable $A$ has the \emph{Lüders property} with respect to
a state vector $\psi$ if the existence of the measurement $\left(A,v,p\right)$
for this $\psi$ and an outcome $v\in\left\{ 0,1\right\} $ implies
that the property $p'=1$ holds stable for this $\psi$ in the measurement
sequence $\left(A,v,p\right)\rightarrow\left(A,v,p'\right)$. 
\end{defn}
In other words, the Lüders property means that an answer to a question
(represented by $A$) is repeated if the question is repeated, and
that this is true not just for one initial state vector $\psi$, but
for all state vectors sufficiently close to it.

We now can formulate our first proposition.
\begin{prop}[repeated measurements]
\emph{}\label{prop: aa dynamics} An observable $A$ has the Lüders
property if and only if $U_{\Delta}$ in (\ref{eq: unitary evolution})
is ineffective for $A$.
\end{prop}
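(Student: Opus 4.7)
The plan is to prove the two implications of the biconditional separately, with the stability principle doing real work only in the necessity direction.

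For the sufficient direction (``if''), suppose $U_{\Delta}$ is ineffective for $A$. By Lemma~\ref{lem: inefficient} this means $U_{\Delta}$ commutes with $A$, and the standard fact that a unitary commuting with a Hermitian operator also commutes with each of its spectral projectors gives $U_{\Delta}^{\dagger} P_v U_{\Delta} = P_v$ for $v = 0, 1$. Substituting into (\ref{eq: p' ineffiecient evolution}) and using idempotency $P_v^2 = P_v$ collapses the expression to $p' = \Vert P_v \psi \Vert^2 / \Vert P_v \psi \Vert^2 = 1$. This holds for every valid initial state, and so holds stably, giving the L\"uders property at once.

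For the necessary direction (``only if''), assume $A$ has the L\"uders property and fix an outcome $v$ for which the first measurement exists. Set $W = U_{\Delta}^{\dagger} P_v U_{\Delta}$; since $U_{\Delta}$ is unitary, a direct check shows $W$ is itself an orthogonal projector of the same rank as $P_v$, with range $U_{\Delta}^{\dagger}(\mathrm{range}(P_v))$. The hypothesis $p' = 1$ together with (\ref{eq: p' ineffiecient evolution}) reads $\Vert W P_v \psi \Vert^2 = \Vert P_v \psi \Vert^2$, and because $I - W$ is a positive semidefinite projection the identity $\Vert x \Vert^2 - \Vert W x \Vert^2 = \Vert (I - W) x \Vert^2$ forces $(I - W) P_v \psi = 0$, so $P_v \psi$ lies in $\mathrm{range}(W)$.

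The step I expect to be the main obstacle is upgrading this pointwise conclusion to the operator identity $W = P_v$, and this is exactly where the stability principle enters. By stability, $(I - W) P_v \phi = 0$ holds for every $\phi = (\psi + \delta)/\Vert \psi + \delta \Vert$ with $\delta$ in a small ball about the origin; normalization drops out of this homogeneous equation, and subtracting the $\delta = 0$ case gives $(I - W) P_v \delta = 0$ on a ball, whence by linearity and scaling on the whole Hilbert space (this is the argument I will invoke from Lemma~\ref{lem: stability}). Therefore $W P_v = P_v$, so $\mathrm{range}(P_v) \subseteq \mathrm{range}(W)$, and matching ranks forces $W = P_v$, i.e., $U_{\Delta}^{\dagger} P_v U_{\Delta} = P_v$. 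Commutation of $U_{\Delta}$ with $P_v$ follows; since $P_0 + P_1 = I$ it automatically commutes with the other eigenprojector as well, hence with $A = P_1$, which by Lemma~\ref{lem: inefficient} is the ineffectiveness of $U_{\Delta}$ for $A$.
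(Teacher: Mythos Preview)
Your proof is correct and tracks the paper's argument through the key step: using $p'=1$ to get $\Vert W P_v\psi\Vert=\Vert P_v\psi\Vert$ for the projector $W=U_\Delta^\dagger P_v U_\Delta$, deducing $WP_v\psi=P_v\psi$, and invoking stability (Lemma~\ref{lem: stability}) to upgrade this to the operator identity $WP_v=P_v$. The finish, however, is genuinely cleaner than the paper's. From $WP_v=P_v$ the paper takes conjugate transposes to obtain $P_vW=P_v$, infers that $W$ and $P_v$ commute, passes to an eigenbasis of $P_v$, and then argues via the diagonal forms of $W$ and $U_\Delta^\dagger(I-P_v)U_\Delta$ together with their summing to $I$ to conclude $W=P_v$. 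Your rank argument short-circuits all of this: unitary conjugation preserves rank, so $W$ is an orthogonal projector with $\dim\mathrm{range}(W)=\dim\mathrm{range}(P_v)$, and the inclusion $\mathrm{range}(P_v)\subseteq\mathrm{range}(W)$ coming from $WP_v=P_v$ then forces equality of ranges, hence $W=P_v$. This avoids the basis computation entirely and is the more economical route; the paper's version has the minor advantage of being entirely matrix-algebraic without appealing to the notion of rank or range, but yours is what most readers would prefer.
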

See Appendix for a formal proof. In the formulation of Property 1,
the interval $\Delta$ and the question represented by $A$ can vary
within some broad limits, whence the inefficiency of $U_{\Delta}$
for $A$ should also hold for each of these intervals combined with
each of these questions. 

We have to be careful not to overgeneralize the Lüders property and
the ensuing inefficiency property. As we discussed in Introduction,
one can think of situations where replications of a question may lead
the respondent to ``change her mind.'' The most striking contrast,
however, is provided by psychophysical applications of QM. Here, the
inter-trial dynamics not only cannot be ignored, it must play a central
role. 

Let us illustrate this on an old but very thorough study by Atkinson,
Carterette, and Kinchla \cite{ACK1}. In the experiments they report,
each stimulus consisted of two side-by-side identical fields of luminance
$L$, to one of which a small luminance increment $\Delta L$ could
be added, serving as the signal to be detected. There were three stimuli:
\begin{equation}
a=\left(L+\Delta L,L\right),\: b=\left(L,L+\Delta L\right),\: c=\left(L,L\right).
\end{equation}
In each trial the observer indicated which of the two fields, right
one or left one, contained the signal. There were thus two possible
responses: Left and Right. An application of QM analysis to these
experiments requires $a,b,c$ to be translated into observable $A,B,C$,
each with two eigenvalues, say, $0=\textnormal{Left}$ and $1=\textnormal{Right}$.
In the experiments we consider no feedback was given to the observers
following a response. This is a desirable feature. It makes the sequence
of trials we consider formally comparable to successive measurements
of spins in quantum physics: measurements simply follow each other,
with no interventions in between.%
\footnote{However, this precaution seems unnecessary, as the results of the
experiments with feedback in Ref. \cite{ACK1} do not qualitatively
differ from the ones we discuss here. %
}

We are interested in measurement sequences 
\begin{equation}
\begin{array}{cc}
\left(A,0,p_{1}\right)\rightarrow\left(A,0,p_{1}'\right), & \left(A,1,p_{2}\right)\rightarrow\left(A,1,p_{2}'\right),\\
\left(B,0,p_{3}\right)\rightarrow\left(B,0,p_{3}'\right), & \left(B,1,p_{4}\right)\rightarrow\left(B,1,p_{4}'\right),\\
\left(C,0,p_{5}\right)\rightarrow\left(C,0,p_{5}'\right), & \left(C,1,p_{6}\right)\rightarrow\left(C,1,p_{6}'\right).
\end{array}\label{eq: experiment ACK}
\end{equation}
Recall that the probabilities $p'_{i}$ ($i=1,\ldots,6$) are conditioned
on previous measurements, so that, e.g., $p'_{1}+p'_{2}\not=1$ while
$p_{1}+p_{2}=1$. 

For each observer, the probabilities were estimated from the last
400 trials out of 800 (to ensure an ``asymptotic'' level of performance).
The results, averaged over 24 observers, were as follows:

\[
\boxed{\begin{array}{c}
\textnormal{Experiment 1}\\
\begin{array}{ccc}
\textnormal{index } & p & p'\\
1 & .65 & .73\\
2 & .35 & .38\\
3 & .36 & .39\\
4 & .64 & .71\\
5 & .50 & .53\\
6 & .50 & .60
\end{array}
\end{array}}\quad\boxed{\begin{array}{c}
\textnormal{Experiment 2}\\
\begin{array}{ccc}
\textnormal{index } & p & p'\\
1 & .56 & .70\\
2 & .44 & .41\\
3 & .27 & .31\\
4 & .73 & .79\\
5 & .39 & .50\\
6 & .61 & .65
\end{array}
\end{array}}
\]
The leftmost column in each table corresponds to the index associated
with $p$ and $p'$ in (\ref{eq: experiment ACK}). Thus, the first
row shows $p_{1}$ and $p'_{1}$, the last one shows $p_{6}$ and
$p'_{6}$. The two experiments differed in one respect only: in Experiment
1 the stimuli $a$ and $b$ were presented with equal probabilities,
while in Experiment 2 the stimulus $b$ was three times more probable
that $a$ (the probability of $c$ was 0.2 in both experiments). The
results show, in accordance with conventional detection models, that
this manipulation makes responses in Experiment 2 biased towards the
correct response to $b$. This aspect of the data, however, is not
of any significance for us. What is significant, is that, in accordance
with Proposition \ref{prop: aa dynamics}, we should conclude that
the inter-trial evolution (\ref{eq: unitary evolution}) here intervenes
always and significantly.

\section{A consequence for ``a$\rightarrow$b$\rightarrow$a''-type measurement
sequences }

Returning to the opinion polling experiments, consider the situation
involving two questions, such as $a=$``Is Bill Clinton honest?''
and $b=$``Is Al Gore honest?'' The two questions are posed in one
of the two orders, $a\rightarrow b$ or $b\rightarrow a$, to a large
group of people. The same as with asking the same question twice in
a row, one would normally consider it unnecessary to extend these
sequences by asking one of the two questions again, by repeating $b$
or $a$ after having asked $a$ and $b$. A typical respondent, again,
will be expected to repeat her first response. We find it ``almost
certain'' (the ``almost'' being inserted here because we cannot
refer to any systematic experimental study of this obvious expectation)
that from the nonempty (in reality, vast) class of questions falling
within the scope of Property 1 one can always choose pairs of questions
falling within the scope of the following extension of this property.

\paragraph{Property 2.\label{Fact 2}}

\emph{Within a nonempty subclass of questions (and for the same set
of inter-trial intervals) for which Property 1 holds, if a question
$a$ is asked following questions $a$ and $b$ (in either order),
the response to it will necessarily be the same as that given to the
question $a$ the first time it was asked. \medskip{}
}

As always, we replace $a,b$ with observables $A,B$, and use the
following notation: the probability of obtaining a value $v$ when
measuring the observable $A$ is denoted $p_{vA}$, $q_{vA}$, etc.
(the letters $p,q,$ etc. distinguishing different measurements);
we use analogous notation for the probability of obtaining a value
$w$ when measuring the observable $B$. 

Consider the measurement sequence
\begin{equation}
\left(A,v,p_{vA}\right)\rightarrow\left(B,w,p_{wB}\right)\rightarrow\left(A,v,p'_{vA}\right)\label{eq: ABA}
\end{equation}
Property 2 implies that in these sequences $p'_{vA}=1$ and $q'_{wA}=1$.
As it turns out, this property has an important consequence (assuming
the two inter-trial intervals in the measurement sequences belong
to the same class as $\Delta$ in Proposition \ref{prop: aa dynamics}).
\begin{prop}[alternating measurements]
\emph{}\label{prop:abab} Let $A$ and $B$ possess the Lüders property,
and let the measurement sequences \textup{ 
\begin{equation}
\left(A,v,p_{vA}\right)\rightarrow\left(B,w,p_{wB}\right)\label{eq: AB}
\end{equation}
}exist for all $v,w\in\left\{ 0,1\right\} $, and some initial state
vector $\psi$. Then, in the measurement sequences (\ref{eq: ABA}),\textup{
}the property $p'_{vA}=1$ holds stable for this $\psi$ if and only
if $A$ and $B$ commute, $AB=BA.$ 
\end{prop}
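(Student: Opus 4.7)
The strategy is to reduce the three-measurement probability to an algebraic identity on the projectors, use the Stability Principle to turn a pointwise identity into an operator identity, and then extract commutativity. Since $A$ and $B$ each possess the Lüders property, Proposition~\ref{prop: aa dynamics} tells us $U_\Delta$ commutes with both $A$ and hence with $P_0, P_1$, and with both $B$ and hence with $Q_0, Q_1$. This is the key simplification: in every subsequent probability computation we can freely slide $U_\Delta$ past any projector and then discard it using unitary-invariance of the norm.

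Carrying out this slide on the sequence $(A,v,p_{vA})\to(B,w,p_{wB})\to(A,v,p'_{vA})$, I expect the evolved state after the first two measurements to be proportional to $Q_w P_v U_\Delta^{2}\psi$, and a direct application of (\ref{eq: probabilities computed}) combined with the commutations gives
\[
p'_{vA}\;=\;\frac{\|P_v Q_w P_v\phi\|^{2}}{\|Q_w P_v\phi\|^{2}},\qquad \phi:=U_\Delta^{2}\psi.
\]
Because $P_v+P_{1-v}=I$ and $P_v, P_{1-v}$ are orthogonal projectors, Pythagoras gives $\|Q_w P_v\phi\|^{2}=\|P_v Q_w P_v\phi\|^{2}+\|P_{1-v}Q_w P_v\phi\|^{2}$, so $p'_{vA}=1$ is equivalent to $P_{1-v}Q_w P_v\phi=0$.

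Now I invoke stability: the Stability Principle gives an open ball of admissible initial vectors $\psi$, and since $U_\Delta^{2}$ is a homeomorphism this is an open ball of admissible $\phi$. The lemma from the Appendix (Lemma \ref{lem: stability}) lets me upgrade the identity $P_{1-v}Q_w P_v\phi=0$ from holding on an open ball to holding as an operator identity $P_{1-v}Q_w P_v=0$. Taking adjoints gives the same identity with $v$ and $1-v$ swapped, so from the four cases $v,w\in\{0,1\}$ I extract $P_{1-v}Q_w=P_{1-v}Q_w P_v$ and $Q_w P_v = P_{1-v} Q_w P_v$ by writing $P_v = I-P_{1-v}$; equating these yields $P_v Q_w = Q_w P_v$ for every $v,w$. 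Since $A=P_1$ and $B=Q_1$, this is exactly $AB=BA$.

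For the converse, if $AB=BA$ then all four projectors mutually commute, so $P_{1-v}Q_w P_v = Q_w P_{1-v}P_v = 0$ and $p'_{vA}=1$ for \emph{every} admissible $\psi$, giving stability trivially. The main obstacle I anticipate is the bookkeeping in the forward direction: one must be careful that $U_\Delta^{\dagger}P_v U_\Delta$ and $U_\Delta^{\dagger}Q_w U_\Delta$ simplify to $P_v$ and $Q_w$ at the right stages of (\ref{eq: q with dynamics}), and that the stability argument is applied to a genuinely open neighborhood (hence the squared $U_\Delta^{2}$ does not cause trouble since unitary operators are open mappings). Everything else is routine projector algebra.
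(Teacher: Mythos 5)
Your overall strategy coincides with the paper's: both proofs reduce $p'_{vA}=1$ to a vector identity ($P_{1-v}Q_wP_v\phi=0$ in your version, $PQP\psi=QP\psi$ in the paper's --- these are the same statement, since $P_{1-v}=I-P_v$), upgrade it to an operator identity via Lemma \ref{lem: stability}, and extract commutativity by taking adjoints. Your Pythagoras step is just the paper's observation that an orthogonal projection of a unit vector has length $1$ only if the vector is fixed, and your explicit $U_{\Delta}^{2}$ bookkeeping is the paper's remark that the L\"uders property (via Proposition \ref{prop: aa dynamics}) makes both evolution operators ineffective, so they cancel. The one step that fails as written is the final extraction of commutativity: from $P_{1-v}Q_wP_v=0$ you claim $Q_wP_v=P_{1-v}Q_wP_v$, which together with the vanishing would give $Q_wP_v=0$, and ``equating'' your two claimed identities would give $P_{1-v}Q_w=Q_wP_v$; neither is what you want. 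The correct line, using exactly the ingredients you already have ($P_{1-v}Q_wP_v=0$ and its adjoint $P_vQ_wP_{1-v}=0$), is
\[
Q_wP_v=\left(P_v+P_{1-v}\right)Q_wP_v=P_vQ_wP_v,\qquad P_vQ_w=P_vQ_w\left(P_v+P_{1-v}\right)=P_vQ_wP_v,
\]
whence $P_vQ_w=Q_wP_v$ and $AB=BA$. (The paper gets there in one move: $PQP=QP$ has a Hermitian left-hand side, so $QP=\left(QP\right)^{\dagger}=PQ$.) With that one-line repair your argument is complete and essentially identical to the paper's.
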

In other words, if the probabilities $p_{vA},p_{wB},q_{wB},q_{vA}$
are nonzero in (\ref{eq: AB}) for some $\psi$, the sequences (\ref{eq: ABA})
exist with $p'_{vA}=1$ and $q'_{wA}=1$ for all state vectors in
a small neighborhood of $\psi$ if and only if $AB=BA$. See Appendix
for a formal proof.

To illustrate how this works, recall that $A$ and $B$ commute if
and only if they have one and the same set of orthonormal eigenvectors
$e_{1},\ldots,e_{n}$ (generally, not unique). Since $A$ and $B$
have two eigenvalues each, the difference between the two observables
is in how these eigenvectors are grouped into two eigenspaces. Take
one of the measurement sequences of the (\ref{eq: ABA})-type, say,
\begin{equation}
\left(A,1,p_{1a}\right)\rightarrow\left(B,0,p_{0B}\right)\rightarrow\left(A,1,p'_{1a}\right).
\end{equation}
Since $A$ and $B$ have the Lüders property, all the probabilities
are the same as if there was no inter-trial dynamics involved. Proceeding
under this assumption, the first measurement projects the initial
vector $\psi=\psi^{\left(1\right)}$ into $V_{1}$ that spans some
of the vectors $e_{1},\ldots,e_{n}$. Let this projection (after its
length was normalized to 1) be $\psi^{\left(2\right)}$. The second
measurement projects $\psi^{\left(2\right)}$ into the intersection
$V_{1}\cap W_{0}$ that spans a smaller subset of these vectors. The
third measurement then, since the second normalized projection $\psi^{\left(3\right)}$
is already in $V_{1}$, does not change it, $\psi^{\left(4\right)}=\psi^{\left(3\right)}$.
This means that the third probability, $p'_{1a}$, being the scalar
product of $\psi^{\left(3\right)}$ and $\psi^{\left(4\right)}$,
must be unity.

The commutativity of $A$ and $B$ is important because it has an
experimentally testable consequence. 
\begin{prop}[no order effect]
\label{prop: commutativity} If $A$ and $B$ possessing the L\"uders
property commute, then in the measurement sequences 
\[
\left(A,v,p_{vA}\right)\rightarrow\left(B,w,p_{wB}\right),
\]
\[
\left(B,w,q_{wB}\right)\rightarrow\left(A,v,q_{vA}\right)
\]
the joint probabilities of the two outcomes are the same, 
\begin{equation}
p_{vA}p_{wB}=q_{wB}q_{vA}.\label{eq: no order effect}
\end{equation}
Consequently, 
\begin{equation}
\Pr\left[v\left(A\right)=v\textnormal{ in trial 1}\right]=\Pr\left[v\left(A\right)=v\textnormal{ in trial 2}\right]\label{eq: no order 1}
\end{equation}
and 
\begin{equation}
\Pr\left[w\left(B\right)=w\textnormal{ in trial 1}\right]=\Pr\left[w\left(B\right)=w\textnormal{ in trial 2}\right].\label{eq: no order 2}
\end{equation}

\end{prop}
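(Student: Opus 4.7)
The plan is to reduce the claim to a straightforward projector calculation by first disposing of the inter-trial unitary evolution. Since $A$ and $B$ both have the Lüders property, Proposition \ref{prop: aa dynamics} tells us that the evolution operators $U_\Delta$ appearing between the two trials are ineffective for $A$ and for $B$, i.e., $U_\Delta^\dagger P_v U_\Delta = P_v$ and $U_\Delta^\dagger Q_w U_\Delta = Q_w$ for every $v,w\in\{0,1\}$. So in the formula (\ref{eq: q with dynamics}) for the second-trial probability, the evolution drops out, and the joint probabilities of the two sequences reduce to
\[
p_{vA}p_{wB}=\|P_v\psi\|^2\cdot\frac{\|Q_wP_v\psi\|^2}{\|P_v\psi\|^2}=\|Q_wP_v\psi\|^2,
\]
\[
q_{wB}q_{vA}=\|P_vQ_w\psi\|^2.
\]

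Next I would use the commutativity assumption. Since $A=P_1$ and $B=Q_1$ are binary observables, $AB=BA$ means $P_1Q_1=Q_1P_1$. Using $P_0=I-P_1$ and $Q_0=I-Q_1$, this extends immediately to $P_vQ_w=Q_wP_v$ for all $v,w\in\{0,1\}$. Hence $\|Q_wP_v\psi\|^2=\|P_vQ_w\psi\|^2$, which is exactly the desired joint-probability equality (\ref{eq: no order effect}).

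For the marginal equalities, I would sum the joint equality over the unobserved outcome. Because $\{Q_w\}$ is a complete orthogonal projective resolution of the identity ($Q_0+Q_1=I$, $Q_wQ_{w'}=\mathbf{0}$ for $w\neq w'$, $Q_w^2=Q_w$), one has $\sum_w\|Q_w\phi\|^2=\langle(\sum_w Q_w)\phi,\phi\rangle=\|\phi\|^2$ for every vector $\phi$, and similarly for $\{P_v\}$. Applying this with $\phi=P_v\psi$ gives
\[
\sum_w p_{vA}p_{wB}=\sum_w\|Q_wP_v\psi\|^2=\|P_v\psi\|^2=p_{vA},
\]
which is the probability that $v(A)=v$ in trial $1$ of the first sequence. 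Applying it with $\phi=\psi$ on the right-hand side gives
\[
\sum_w q_{wB}q_{vA}=\sum_w\|P_vQ_w\psi\|^2=\sum_w\|Q_wP_v\psi\|^2=p_{vA},
\]
which is the probability that $v(A)=v$ in trial $2$ of the second sequence; this yields (\ref{eq: no order 1}), and (\ref{eq: no order 2}) follows by the symmetric argument with the roles of $A$ and $B$ interchanged.

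There is no serious obstacle in this argument; the only step that deserves care is the extension of the hypothesis $AB=BA$ (which literally concerns only $P_1$ and $Q_1$) to commutativity of all four pairs $P_v,Q_w$, and the invocation of Proposition \ref{prop: aa dynamics} to kill the unitary evolutions before doing any algebra. Everything else is the standard calculation that commuting projective measurements produce order-independent joint statistics.
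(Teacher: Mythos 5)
Your proof is correct and follows essentially the same route as the paper's: invoke Proposition \ref{prop: aa dynamics} to render the inter-trial unitaries ineffective, reduce the joint probabilities to $\Vert Q_{w}P_{v}\psi\Vert^{2}=\langle P_{v}Q_{w}P_{v}\psi,\psi\rangle$ versus $\Vert P_{v}Q_{w}\psi\Vert^{2}=\langle Q_{w}P_{v}Q_{w}\psi,\psi\rangle$, and equate them via commutativity of the eigenprojectors. The only difference is that you spell out the marginalization step that the paper leaves as a one-line remark, which is a welcome elaboration rather than a new idea.
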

To clearly understand what is being stated, recall that $p_{wB}$
is the conditional probability of observing the value $w$ of $B$
given that before this the outcome was the value $v$ of $A$. So,
the product of $p_{vA}p_{wB}$ is the overall probability of the first
of the two sequences in the proposition. The value of $q_{wB}q_{vA}$
is understood analogously. Equation (\ref{eq: no order effect}) therefore
states that 
\[
\Pr\left[v\left(A\right)=v\textnormal{ in trial 1 and }w\left(B\right)=w\textnormal{ in trial 2}\right]=\Pr\left[w\left(B\right)=w\textnormal{ in trial 1 and }v\left(A\right)=v\textnormal{ in trial 2}\right].
\]
The proof of the proposition is given in Appendix.

Equations (\ref{eq: no order effect})-(\ref{eq: no order 1})-(\ref{eq: no order 2})
are empirically testable predictions. Moreover, if we assume that
the questions like ``Is Clinton honest'' and ``Is Gore honest''
fall within the scope of Property 2 (and it would be amazing if they
did not), these predictions are known to be de facto falsified.

\paragraph{Property 3.\label{Fact 3}}

\emph{Within a nonempty subclass of questions for which Property 2
holds (and for the same set of inter-trial intervals), the joint probability
of two successive responses depends on the order in which the questions
were posed.\medskip{}
} 

This ``\emph{question order effect}'' has in fact been presented
as one for whose understanding QM is especially useful: the empirical
finding that $p_{vA}p_{wB}\not=q_{wB}q_{vA}$ is explained in Ref.
\cite{WANG} by assuming that $A$ and $B$ do not commute. In the
survey reported by Moore \cite{MOO}, about 1,000 people were asked
two questions, one half of them in one order, the other half in another.
The probability estimates are presented for four pairs of questions:
the first pair was about the honesty of Bill Clinton $(a)$ and Al
Gore (\emph{b}), the second about the honesty of Newt Gingrich ($a$)
and Bob Dole ($b$), etc. 
\[
\boxed{\begin{array}{cc}
 & \textnormal{probability of Yes to }a\\
\begin{array}{c}
\textnormal{question pair}\\
1\\
2\\
3\\
4
\end{array} & \begin{array}{cc}
\textnormal{in }a\rightarrow b & \textnormal{ in }b\rightarrow a\\
.50 & .57\\
.41 & .33\\
.41 & .53\\
.64 & .52
\end{array}
\end{array}}\quad\boxed{\begin{array}{cc}
 & \textnormal{probability of Yes to }b\\
\begin{array}{c}
\textnormal{question pair}\\
1\\
2\\
3\\
4
\end{array} & \begin{array}{cc}
\textnormal{in }a\rightarrow b & \textnormal{ in }b\rightarrow a\\
.60 & .68\\
.64 & .60\\
.56 & .46\\
.33 & .45
\end{array}
\end{array}}
\]
The results are presented in the form of $\Pr\left[v\left(A\right)=1\textnormal{ in trial }i\right]$
and $\Pr\left[w\left(B\right)=1\textnormal{ in trial }i\right]$,
$i=1,2$, so the tested predictions are (\ref{eq: no order 1}) and
(\ref{eq: no order 2}). As we can see, for all question pairs, the
probability estimates of Yes to the same question differ depending
on whether the question was asked first or second. Given the sample
size (about 500 respondents per question pair in a given order) the
differences are not attributable to chance variation.

Properties 1, 2, and 3 turn
out to be incompatible within the framework of QM. We should conclude
therefore that QM cannot be applied to the questions that have these
properties without modifications.

\section{Would POVMs work?}

Are there more flexible versions (generalizations) of QM that could
be used instead?

One widely used generalization of the conventional QM involves replacing
the projection operators with \emph{positive-operator-valued measures}
(POVMs), see, e.g., Refs. \cite{BUS} and \cite{DMU}. The conceptual
set-up here is as follows. We continue to deal with an $n$-dimensional
Hilbert space ($n\geq2$). The notion of a state represented by a
unit vector $\psi$ in this space remains unchanged. The generalization
occurs in the notion of an observable. For experiments with binary
outcomes, an observable $A$ of the conventional QM is defined by
(\ref{eq: binary observable}), with eigenvalues $\left(0,1\right)$
and eigenprojectors $\left(P_{0},P_{1}\right)$. The eigenvalues themselves
are not relevant insofar as they are distinct: replacing $0,1$ with
another pair of distinct values amounts to trivial relabeling of the
measurement outcomes. The information about the observable $A$ therefore
is contained in the eigenprojectors $P_{0},P_{1}$. They are Hermitian
positive semidefinite operators subject to the restrictions (\ref{eq: binary properties}).

A generalized observable, or POVM, $A$ (continuing to consider only
binary outcomes) is defined as a pair $\left(E_{0},E_{1}\right)$
of Hermitian positive semidefinite operators in the $n$-dimensional
Hilbert space, summing to the identity matrix $I$. In other words,
the generalization from eigenprojectors $P_{v}$ to POVM components
$E_{v}$ amounts to dropping the idempotency and orthogonality constraints,
defined in (\ref{eq: idempotency, orthogonality}).\textcolor{black}{}%
\footnote{Without getting into details, the theory of POVMs is sometimes referred
to as the \emph{open-system} QM because of \emph{Naimark's dilation
theorem} \cite{NEU}. It says that any POVM $A$ in a Hilbert space
$H$ can be represented by a conventional quantum observable $\tilde{A}$
in a Hilbert space of higher dimensionality, the tensor product $H\otimes K$
of the original Hilbert space $H$ and another Hilbert space $K$.
The latter is interpreted as an \emph{environment} for $H$. The measurements
of $\tilde{A}$ are performed in a conventional way in $H\otimes K$,
and the resulting state vectors are projected back into $H$.%
} 

\textcolor{black}{Any component $E_{v}$ ($v=0,1$) can be presented
as $M_{v}^{\dagger}M_{v}$, where $M_{v}$ is some matrix and $M_{v}^{\dagger}$
is its conjugate transpose. The representation $E=M_{v}^{\dagger}M_{v}$
for a given $E_{v}$ is not unique, but it is supposed to be fixed
within a given experiment (i.e., for a given measurement procedure).}

The measurement formulas specifying $F$ and $G$ in (\ref{eq: very general v})-(\ref{eq: very general rho})
can now be formulated to resemble (\ref{eq: probabilities computed})-(\ref{eq: projection postulate}).
The conditional probability of an outcome $v=0,1$ of the measurement
of $A=\left(E_{0},E_{1}\right)$ in state $\psi^{\left(t\right)}$
is 
\begin{equation}
{\color{black}F\left(\psi^{\left(t\right)},A,v\right)=\Pr\left[v\left(A\right)=v\textnormal{ in trial }t\,\vert\,\mbox{measurements in trials }1,\ldots,t-1\right]={\color{black}\left\langle E_{v}\psi^{\left(t\right)},\psi^{\left(t\right)}\right\rangle =\Vert M_{v}\psi^{\left(t\right)}\Vert^{2}}}\label{eq: probabilities computed-POVM}
\end{equation}
This measurement transforms $\psi^{\left(t\right)}$ into \textcolor{black}{
\begin{equation}
{\color{black}{\color{black}{\color{red}{\color{black}G\left(\psi^{\left(t\right)},A,v\right)=\frac{M_{v}\psi^{\left(t\right)}}{\Vert M_{v}\psi^{\left(t\right)}\Vert}=\psi^{\left(t+1\right)}.}}}}\label{eq: projection postulate-POVM}
\end{equation}
The formula for the evolution of the state vector between trials remains
the same as for the conventional observables, (\ref{eq: unitary evolution}).}

It is easy to see that we no longer need to involve inter-trial changes
in the state vector to explain the fact that, in psychophysics, a
replication of stimulus does not lead to the replication of response.
In a measurement sequence 
\[
\left(A,v,p\right)\rightarrow\left(A,v,p'\right),
\]
if $U_{\Delta}$ is the identity matrix, then $p'$ is given by\textcolor{black}{{}
\[
p'=\left\langle E_{v}\frac{M_{v}\psi}{\Vert M_{v}\psi\Vert},\frac{M_{v}\psi}{\Vert M_{v}\psi\Vert}\right\rangle =\frac{\left\langle (M_{v}^{\dagger})^{2}M_{v}^{2}\psi,\psi\right\rangle }{\langle M_{v}^{\dagger}M_{v}\psi,\psi\rangle}.
\]
This value is generally different from 1: since $(M_{v}^{\dagger})^{2}M_{v}^{2}$,
where $M_{v}$ is not necessarily an orthogonal projector, is generally
different from $M_{v}^{\dagger}M_{v}$, $\langle(M_{v}^{\dagger})^{2}M_{v}^{2}\psi,\psi\rangle$
is generally different from $\langle M_{v}^{\dagger}M_{v}\psi,\psi\rangle$. }

This is interesting, as it suggests the possibility of treating psychophysical
judgments and opinion polling within the same (evolution-free) framework.
This encouraging possibility, however, cannot be realized: the theory
of POVMs cannot help us in reconciling Properties 2 and
3 in opinion-polling, because POVMs with Lüders property
cannot be anything but conventional observables. This is shown in
the following
\begin{prop}[no generalization]
\label{propo: POVM} A POVM $A=\left(E_{0},E_{1}\right)$ has the
Lüders property with respect to a state $\psi$ if and only if $A$
is a conventional observable (i.e., it is a Hermitian operator, and
its components $E_{0},E_{1}$ are its eigenprojectors). 
\end{prop}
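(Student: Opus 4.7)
The plan is to prove both directions of the iff, with the backward direction carrying all the content. The forward direction is routine: if $E_0 = P_0$ and $E_1 = P_1$ are orthogonal eigenprojectors, then by idempotency the post-measurement state $\psi^{(2)} = P_v\psi/\|P_v\psi\|$ already lies in the range of $P_v$, so a second application yields probability $\|P_v\psi^{(2)}\|^2 = 1$ regardless of $\psi$, and Proposition~\ref{prop: aa dynamics} absorbs the effect of any intervening $U_\Delta$.

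For the nontrivial $(\Rightarrow)$ direction I specialize to $U_\Delta = I$; the argument goes through verbatim in the general case with $M_v$ replaced by $U_\Delta M_v$, since $(U_\Delta M_v)^\dagger(U_\Delta M_v) = E_v$ regardless. Set $\psi^{(2)} = M_v\psi/\|M_v\psi\|$, so that $p' = \langle E_v\psi^{(2)},\psi^{(2)}\rangle$. Because $E_{1-v} = I - E_v$ is positive semidefinite and $\langle E_v\phi,\phi\rangle + \langle E_{1-v}\phi,\phi\rangle = 1$ on unit vectors, the equality $p' = 1$ forces $E_{1-v}\psi^{(2)} = 0$ and hence $E_v\psi^{(2)} = \psi^{(2)}$, i.e.\ $E_v M_v\psi = M_v\psi$. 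The stability clause extends this identity from $\psi$ to every admissible unit vector in a spherical neighborhood, and a routine perturbation argument (write a nearby unit vector as $(\psi+\varepsilon\eta)/\|\psi+\varepsilon\eta\|$ for $\eta \perp \psi$ and let $\varepsilon\to 0$) promotes it to the operator identity
\[
E_v M_v = M_v,
\]
equivalently $\mathrm{range}(M_v) \subseteq \ker(I - E_v)$.

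The algebraic core is then a singular value decomposition. Writing $M_v = U\Sigma V^\dagger$ with $\Sigma = \mathrm{diag}(\sigma_1,\ldots,\sigma_n)$, one has $E_v = V\Sigma^2 V^\dagger$, and the identity $E_v M_v = M_v$ reduces to $\Sigma^2 W\Sigma = W\Sigma$ for the unitary $W = V^\dagger U$. Reading off entries gives $W_{ij}\sigma_j(\sigma_i^2 - 1) = 0$, so every column of $W$ indexed by a $j$ with $\sigma_j > 0$ is supported on $\{i : \sigma_i = 1\}$ (note that $\sigma_i \in [0,1]$ because the eigenvalues of $E_v$ lie in that interval). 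Since those columns remain orthonormal, the number of strictly positive $\sigma_j$ cannot exceed the number of unit $\sigma_i$, forcing every positive singular value to equal $1$. Hence $\Sigma^2 = \Sigma$, so $E_v = V\Sigma V^\dagger$ is an orthogonal projector; combining with $E_0 + E_1 = I$ automatically yields $E_0 E_1 = 0$, so $(E_0,E_1)$ satisfies (\ref{eq: binary properties}) and $A$ is a conventional observable.

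The main obstacle I anticipate is the passage from the vector-level identity $E_v M_v\psi = M_v\psi$ to the operator-level identity $E_v M_v = M_v$. Without the stability principle, the L\"uders hypothesis would only constrain $M_v$ on a single ray in Hilbert space, which is far too little to support the SVD rigidity used above; stability is precisely what turns a local fact at one state into a global algebraic constraint on the POVM. Once the operator identity is secured, the remaining singular-value bookkeeping is mechanical and cannot be evaded by reparametrizing $M_v$, because the conclusion is about $M_v^\dagger M_v = E_v$ itself.
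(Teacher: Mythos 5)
Your proof is correct, and its skeleton coincides with the paper's: derive the pointwise identity $E_v U_\Delta M_v\psi=U_\Delta M_v\psi$ from $p'=1$, use the stability principle to upgrade it to the operator identity $E_v(U_\Delta M_v)=U_\Delta M_v$, and then extract from this, together with $E_v=(U_\Delta M_v)^\dagger(U_\Delta M_v)$, that $E_v$ is an orthogonal projector. Where you diverge is in how the two technical sub-steps are carried out. For the step $\langle E_v\phi,\phi\rangle=1\Rightarrow E_v\phi=\phi$, the paper's Lemma \ref{lemma simple} writes $E_v\phi=\phi+\gamma$ and rules out $\gamma\neq0$ by exhibiting a negative leading principal minor of $E_{1-v}$; your observation that $\langle E_{1-v}\phi,\phi\rangle=0$ together with $E_{1-v}\succeq0$ forces $E_{1-v}\phi=0$ is shorter and uses the same underlying fact (positive semidefiniteness of the complementary component) more directly. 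For the final rigidity step, the paper's Lemma \ref{lemma even simpler} diagonalizes $E_v$ and argues entrywise on $S=U_\Delta M_v$ to show every eigenvalue is $0$ or $1$, whereas you run a singular value decomposition of $M_v$ and a counting argument on the orthonormal columns of $W=V^\dagger U$ to force $\Sigma^2=\Sigma$. The two are equivalent in content (the eigenvalues of $E_v$ are the squared singular values of $M_v$, and both arguments amount to the dimension count $\mathrm{rank}(S)=\dim\ker(E_v-I)$), but your SVD version is self-contained and sidesteps the coordinate bookkeeping in the paper's lemma. Your closing remarks correctly identify the stability principle as the load-bearing hypothesis; without it the Lüders property constrains $M_v$ only on a single ray, exactly as the paper's use of Lemma \ref{lem: stability} indicates.
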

See Appendix for a proof.

Proposition \ref{propo: POVM} says that POVMs to be used to model
opinion polling should be conventional observables, otherwise Property
1 will be necessarily contradicted. But then Propositions \ref{prop: aa dynamics}
and \ref{prop:abab} are applicable, and they say that the inter-trial
dynamics is ineffective, and that all the observables representing
different questions within the scope of Property 2 pairwise commute.
This, in turn, allows us to invoke Proposition \ref{prop: commutativity},
with the result that, contrary to Property 3, the order of the questions
should have no effect on the response probabilities.

\section{Conclusion}

Let us summarize. Both cognitive science and quantum physics deal
with fundamentally probabilistic input-output relations, exhibiting
a variety of sequential effects. Both deal with these relations and
effects by using, in some form or another, the notion of an ``internal
state'' of a system. In psychology, the maximally general version
is provided by the probabilistic generalization of the old behaviorist
S-O-R scheme: the probability of an output is a function of the input
and the system's current state (function $F$ in (\ref{eq: very general v})),
and both the input and the output change the current state into a
new state (function $G$ in (\ref{eq: very general rho})). If we
discretize behavior into subsequent trials, then we need also a function
describing how the state of the system changes between the trials
(function $H$ in (\ref{eq: very general evolution})).

Quantum physics uses a special form of the functions $F$, $G$, and
$H$, the ones derived from (or constituting, depending on the axiomatization)
the principles of QM. Functions $F$ and $G$ are given by (\ref{eq: probabilities computed})-(\ref{eq: projection postulate})
in the conventional QM, and by (\ref{eq: probabilities computed-POVM})-(\ref{eq: projection postulate-POVM})
in the QM with POVMs, with the inter-trial evolution in both cases
described by (\ref{eq: unitary evolution}). Nothing a priori precludes
these special forms of $F,G,H$ from being applicable in cognitive
science, and such applications were successfully tried: by appropriately
choosing observables and states, certain experimental data in human
decision making were found to conform with QM predictions \cite{POTZ}.

As this paper shows, however, QM encounters difficulties in accounting
for some very basic empirical properties. In opinion polling (more
generally, in all psychological tasks where stimuli/questions can
be confidently identified by features other than those being judged),
there is a class of questions such that a repeated question is answered
in the same way as the first time it was asked. This agrees with the
Lüders projection postulate, and renders the use of both the inter-trial
dynamics of the state vector and the POVM theory unnecessary: to have
this property the questions asked have to be represented by conventional
observables with ineffective inter-trial dynamics. In many situations,
we also expect that for a certain class of questions the response
to two replications of a given question remains the same even if we
insert another question in between and have it answered. This property
can only be handled by QM if the conventional observables representing
different questions all pairwise commute, i.e., can be assigned the
same set of eigenvectors. This, in turn, leads to a strong prediction:
the joint probability of two responses to two successive questions
does not depend on their order. This prediction is known to be violated
for some pairs of questions. The explanation of the ``question order
effect'' is in fact one of the most successful applications of QM
in psychology \cite{WANG}, but it requires noncommuting observables,
and these, as we have seen, cannot account for the repeated answers
to repeated questions.%
\footnote{Some of these issues were previously raised in a more informal manner
by Geoff Iverson (personal communication, June 2010).%
}

Our paper in no way dismisses the applications of QM in cognitive
psychology, or diminishes their modeling value. It merely sounds a
cautionary note: it seems that we lack a deeper theoretical foundation,
a set of well-justified principles that would determine where QM can
and where it must not be used. We should also point out that the problems
identified in this paper are not unique to QM. For example, random
utility theories also have difficulty explaining the trial to trial
dependencies in answers to questions. If we assume, as done in traditional
random utility theories (see, e.g., Ref. \cite{REG}), that a response
is based on a randomly sampled utility in each trial, then repeating
the response will produce different random samples in each trial.
That is why in the experiments designed to test random utility models
questions never repeated back to back, and instead ``filler trials''
are inserted to make participants forget their earlier choice.

Clearly, the basic properties that we have shown to contravene QM
can be ``explained away'' by invoking considerations formulated
in traditional psychological terms. One can, e.g., dismiss the problem
with repeated questions in opinion polling by pointing out that the
respondents ``merely'' remember their previous answers and ``simply''
do not want to contradict themselves. One can similarly dismiss the
question order effect by pointing out that the first question ``simply''
changes the context for pondering the second question, e.g., reminds
something the respondent would not have thought of had the second
question been asked first. These may very well be valid considerations.
But if one allows for such extraneous to QM explanations, one needs
to understand (A) why the same extraneous considerations do not intervene
in situations where QM is successfully applicable, and (B) why one
cannot stick to considerations of this kind and dispense with QM altogether. 

A reasonable answer is that the value of QM applications is precisely
in that it replaces the disparate conventional psychological notions
with unified and mathematically rigorous ones. But then in those situations
where we find QM not applicable one needs more than invoking these
conventional psychological notions. One needs principles. Both in
a psychophysical detection experiment and in opinion polling, participants
may think of various things between trials, and previously presented
stimuli/questions as well as previously given responses definitely
change something in their mind, affecting their responses to subsequent
stimuli/questions. Why then the applicability of QM is not the same
in these two cases? Why, e.g., should the inter-trials dynamics of
the state vector (or the use of POVMs in place of conventional observables)
be critical in one case and ineffective (or unnecessary) in another? 

One should also consider the possibility that rather than acting as
switches distinguishing the situations in which (\ref{eq: probabilities computed})-(\ref{eq: projection postulate})
or (\ref{eq: probabilities computed-POVM})-(\ref{eq: projection postulate-POVM})
are and are not applicable, the set of the hypothetical principles
in question may require a higher level of generality for the functions
$F,G,H$. A serious and meticulous work is needed therefore to determine
precisely what features of QM are critical for this or that (un)successful
explanation. As an example, virtually any functions $F,G,H$ in the
general formulas (\ref{eq: very general v})-(\ref{eq: very general rho})-(\ref{eq: very general evolution})
predict the existence of the question order effect, and the functions
can always be adjusted to account for any specific effect. The QQ
constraint for the question order effect discovered by Wang and Busemeyer
\cite{WANG} means that, for any two questions $a,b$ and any respective
responses $v,w\in\left\{ 0,1\right\} $, 
\[
h_{ab}\left(v,w\right)=\Pr\left[v\textnormal{ in response to }a\textnormal{ in trial 1, and }w\textnormal{ in response to }b\textnormal{ in trial 2}\right]=f_{ab}\left(v,w\right)+g_{ab}\left(v,w\right),
\]
where 
\begin{equation}
f_{ab}\left(v,w\right)=f_{ba}\left(w,v\right),\label{eq: symmetry}
\end{equation}
and 
\begin{equation}
g_{ab}\left(v,w\right)=-g_{ab}\left(1-v,1-w\right).\label{eq: antisymmetry}
\end{equation}
It follows then that 
\[
h_{ab}\left(v,w\right)+h_{ab}\left(1-v,1-w\right)=h_{ba}\left(w,v\right)+h_{ba}\left(1-w,1-v\right),
\]
which is the QQ equation. Clearly, $F,G,H$ functions in (\ref{eq: very general v})-(\ref{eq: very general rho})-(\ref{eq: very general evolution})
can be chosen so that $f_{ab}$ and $g_{ab}$ have the desired symmetry
properties, and the QM version of $F$ and $G$ used in Ref. \cite{WANG}
(with ineffective $H$) is only one way of achieving this. It is an
open question whether one of many possible generalizations of this
QM version may turn out more profitable for dealing with opinion polling.

\paragraph{Acknowledgments. }

This work was supported by NSF grant SES-1155956 to END, AFOSR grant
FA9550-12-1-0397 to JRB, by the visiting professor fellowships for
IB and AKh within the framework of the grant of the Quantum Bio-information
Center of Tokyo University of Science, and by the Linnaeus University
Modeling of Complex Hierarchic Systems grant to IB and AKh.

\subsection*{Appendix: Proofs}

\paragraph*{\emph{Proposition \ref{prop: aa dynamics}}}

\emph{(repeated measurements) An observable $A$ has the Lüders property
if and only if $U_{\Delta}$ in (\ref{eq: unitary evolution}) is
ineffective for $A$.}
\begin{proof}
The ``if'' part is demonstrated by (\ref{eq: p' ineffiecient evolution})-(\ref{eq: p' again}).
For the ``only if'' part, let the eigenprojector $P$ correspond
to the eigenvalue $v$ of $A$. We have\textcolor{black}{
\[
p=\langle P\psi,\psi\rangle=\Vert P\psi\Vert^{2}>0,
\]
and the next state vector is 
\[
\psi^{\left(2\right)}=\frac{P\psi}{\Vert P\psi\Vert}.
\]
Following the evolution 
\[
\psi^{\left(2\right)}\rightarrow\psi_{\Delta}^{\left(2\right)}=U_{\Delta}\psi^{\left(2\right)},
\]
the L\"uders property implies
\[
p'=\langle P\psi_{\Delta}^{\left(2\right)},\psi_{\Delta}^{\left(2\right)}\rangle=1,
\]
or, equivalently,
\[
\langle PU_{\Delta}\psi^{\left(2\right)},U_{\Delta}\psi^{\left(2\right)}\rangle=\langle U_{\Delta}^{\dagger}PU_{\Delta}\psi^{\left(2\right)},\psi^{\left(2\right)}\rangle=1.
\]
}As $\Vert\psi^{\left(2\right)}\Vert=1$, and $U_{\Delta}^{\dagger}PU_{\Delta}$
is an orthogonal projection, the lengths $\left\Vert U_{\Delta}^{\dagger}PU_{\Delta}\psi^{\left(2\right)}\right\Vert $
does not exceed 1. Therefore $\langle U_{\Delta}^{\dagger}PU_{\Delta}\psi^{\left(2\right)},\psi^{\left(2\right)}\rangle=1$
implies 
\[
U_{\Delta}^{\dagger}PU_{\Delta}\psi^{\left(2\right)}=\psi^{\left(2\right)},
\]
or
\[
U_{\Delta}^{\dagger}PU_{\Delta}P\psi=P\psi.
\]
By the stability considerations (Lemma \ref{lem: stability} below,
with $X_{1}=P$ to guarantee $p>0$, and $Y=U_{\Delta}^{\dagger}PU_{\Delta},Z=P$),
\textcolor{black}{
\[
U_{\Delta}^{\dagger}PU_{\Delta}P=P.
\]
Since $P$ is Hermitian
\[
P^{\dagger}=P\left(U_{\Delta}^{\dagger}PU_{\Delta}\right)=P=\left(U_{\Delta}^{\dagger}PU_{\Delta}\right)P,
\]
so $P$ and $U_{\Delta}^{\dagger}PU_{\Delta}$ commute. Now, $U_{\Delta}^{\dagger}\left(I-P\right)U_{\Delta}=I-U_{\Delta}^{\dagger}PU_{\Delta}$,
and it commutes with $I-P$: 
\[
\left(I-U_{\Delta}^{\dagger}PU_{\Delta}\right)\left(I-P\right)=I-U_{\Delta}^{\dagger}PU_{\Delta}-P+\left(U_{\Delta}^{\dagger}PU_{\Delta}\right)P=I-U_{\Delta}^{\dagger}PU_{\Delta}-P+P\left(U_{\Delta}^{\dagger}PU_{\Delta}\right)=\left(I-P\right)\left(I-U_{\Delta}^{\dagger}PU_{\Delta}\right).
\]
Let us choose an orthonormal basis $e_{1},\ldots,e_{n}$ consisting
of the eigenvectors of $P$, so that $e_{1},\ldots,e_{k}$ are the
eigenvectors associated with eigenvalue 1 (and then the rest of the
$e$'s are the eigenvectors of $I-P$ with eigenvalue 1). In this
basis, $P$ is a diagonal matrix with the first $k$ diagonal entries
1, and the rest of them zero, and $I-P$ is a diagonal matrix with
the last $n-k$ diagonal entries 1, and the rest of them zero. We
have $Pe_{i}=e_{i}$, for $i\leq k$, and then 
\[
\left(U_{\Delta}^{\dagger}PU_{\Delta}\right)Pe_{i}=\left(U_{\Delta}^{\dagger}PU_{\Delta}\right)e_{i}=e_{i}.
\]
So, all $e_{1},\ldots,e_{k}$ are eigenvectors of $U_{\Delta}^{\dagger}PU_{\Delta}$
with eigenvalues 1. Since $U_{\Delta}^{\dagger}PU_{\Delta}$ has the
same eigenvectors $e_{1},\ldots,e_{n}$ as $P$, it is a diagonal
matrix with the first $k$ diagonal entries 1. Analogously we find
that $U_{\Delta}^{\dagger}\left(I-P\right)U_{\Delta}$ is a diagonal
matrix with the last $n-k$ diagonal entries 1. Since these matrices
add to $I$, the rest of the diagonal entries in $U_{\Delta}^{\dagger}PU_{\Delta}$
must be zero, and this means that $U_{\Delta}^{\dagger}PU_{\Delta}=P$.
By Lemma \ref{lem: inefficient}, this means that $U_{\Delta}$ and
$A$ commute.}
\end{proof}

\paragraph{Proposition \ref{prop:abab}}

\emph{(}alternating measurements\emph{). Let $A$ and $B$ possess
the Lüders property, and let the measurement sequences  
\[
\left(A,v,p_{vA}\right)\rightarrow\left(B,w,p_{wB}\right)
\]
exist for all $v,w\in\left\{ 0,1\right\} $, and some initial state
vector $\psi$. Then, in the measurement sequences
\[
\left(A,v,p_{vA}\right)\rightarrow\left(B,w,p_{wB}\right)\rightarrow\left(A,v,p'_{vA}\right),
\]
the property $p'_{vA}=1$ holds stable for this $\psi$ if and only
if $A$ and $B$ commute, $AB=BA.$} 
\begin{proof}
Let the eigenprojectors $P$ and $Q$ correspond to the eigenvalue
$0$ of $A$ and $B$, respectively. Consider the measurement sequence
\[
\left(A,0,p_{0A}\right)\rightarrow\left(B,0,p_{0B}\right)\rightarrow\left(A,0,p'_{0A}\right).
\]
Let the two inter-trial intervals be $\Delta_{1}$ and $\Delta_{2}$.
Due to the Lüders property, each of the corresponding evolution operators
$U_{1}$ and $U_{2}$ is ineffective for both $A$ and $B$ (commutes
with any of their eigenprojectors). Then the state vectors at the
beginning of each trial are
\[
\psi=\psi^{\left(1\right)}\rightarrow U_{1}\frac{P_{v}\psi}{\Vert P_{v}\psi\Vert}=\psi{}_{\Delta_{1}}^{\left(2\right)}\rightarrow U_{2}U_{1}\frac{Q_{w}P_{v}\psi}{\Vert Q_{w}U_{1}P_{v}\psi\Vert}=\psi{}_{\Delta_{2}}^{\left(3\right)}
\]
and the corresponding probabilities are 
\[
p_{0A}=\Vert P\psi\Vert^{2},\; p_{0B}=\frac{\langle QP\psi,P\psi\rangle}{\Vert P\psi\Vert^{2}},\; p_{0A}^{\prime}=\frac{\langle PQP\psi,QP\psi\rangle}{\Vert QP\psi\Vert^{2}}.
\]
The ``if'' part is proved by direct computation. If $AB=BA$, then
$PQ=QP$, and we have
\[
p_{0A}^{\prime}=\frac{\langle PQP\psi,QP\psi\rangle}{\Vert QP\psi\Vert^{2}}=\frac{\langle QP\psi,QP\psi\rangle}{\Vert QP\psi\Vert^{2}}=1.
\]
We now prove the ``only if'' part. Denoting $\phi=QP\psi/\Vert QP\psi\Vert$,
the condition 
\[
\langle P\phi,\phi\rangle=p_{0A}^{\prime}=1
\]
implies $P\phi=\phi$, because $\Vert\phi\Vert=1$, and $\Vert P\phi\Vert$
(the length of an orthogonal projection of $\phi$) does not exceed
1. Hence 
\[
PQP\psi=QP\psi.
\]
By the stability considerations (Lemma \ref{lem: stability} below,
with $X_{1}=P,X_{2}=PQP$ to guarantee $p_{0A}>0,p_{0B}>0$, and $Y=PQP,Z=QP$),
\[
PQP=QP,
\]
and, taking the conjugate transpositions,
\[
PQP=\left(PQP\right)^{\dagger}=PQ.
\]
So, $PQ=QP$. By simple algebra then, either of $P,I-P$ commutes
with either of $Q,I-Q$, and this means that $AB=BA.$
\end{proof}

\paragraph{Proposition \ref{prop: commutativity}}

(no order effect). \emph{If $A$ and $B$ possessing the L\"uders
property commute, then in the measurement sequences 
\[
\left(A,v,p_{vA}\right)\rightarrow\left(B,w,p_{wB}\right),
\]
\[
\left(B,w,q_{wB}\right)\rightarrow\left(A,v,q_{vA}\right)
\]
the joint probabilities of the two outcomes are the same, 
\[
p_{vA}p_{wB}=q_{wB}q_{vA}.
\]
Consequently, 
\[
\Pr\left[v\left(A\right)=v\textnormal{ in trial 1}\right]=\Pr\left[v\left(A\right)=v\textnormal{ in trial 2}\right]
\]
and 
\[
\Pr\left[w\left(B\right)=w\textnormal{ in trial 1}\right]=\Pr\left[w\left(B\right)=w\textnormal{ in trial 2}\right].
\]
}
\begin{proof}
For the sequence $\left(A,v,p_{vA}\right)\rightarrow\left(B,w,p_{wB}\right)$,
we have
\[
\psi=\psi^{\left(1\right)}\rightarrow U_{A}\frac{P\psi}{\Vert P\psi\Vert}=\psi{}_{\Delta_{A}}^{\left(2\right)}
\]
and the probabilities are
\[
p_{0A}=\Vert P\psi\Vert^{2},
\]
\[
p_{0B}=\langle Q\psi{}_{\Delta_{A}}^{\left(2\right)},\psi{}_{\Delta_{A}}^{\left(2\right)}\rangle=\langle QU_{A}\frac{P\psi}{\Vert P\psi\Vert},U_{A}\frac{P\psi}{\Vert P\psi\Vert}\rangle=\frac{\langle QU_{A}P\psi,U_{A}P\psi\rangle}{\Vert P\psi\Vert^{2}}.
\]
The joint probability is therefore 
\[
p_{0A}p_{0B}=\langle QU_{A}P\psi,U_{A}P\psi\rangle=\langle PQP\psi,\psi\rangle,
\]
where we have used the commutativity of the unitary operators with
the observables. Analogously, for the sequence \emph{$\left(B,w,q_{wB}\right)\rightarrow\left(A,v,q_{vA}\right)$,}
we get 
\[
p_{0B}p_{0A}=\langle QPQ\psi,\psi\rangle.
\]
But $P$ and $Q$ commute by Proposition \ref{prop:abab}, whence
\[
PQP=QPQ.
\]
This proves $p_{vA}p_{wB}=q_{wB}q_{vA}$. The other two equations
follow by presenting the probabilities in them as sums of suitably
chosen joint probabilities.
\end{proof}

\paragraph{Proposition \ref{propo: POVM}}

\emph{(no generalization)A POVM $A=\left(E_{0},E_{1}\right)$ has
the Lüders property with respect to a state $\psi$ if and only if
$A$ is a conventional observable (i.e., it is a Hermitian operator,
and its components $E_{0},E_{1}$ are its eigenprojectors).} 
\begin{proof}
The ``if'' part is obvious: if $A$ is a Hermitian operator, it
has the Lüders property with respect to any state $\psi$. 

We prove the ``only if'' part. Consider the measurement sequence
\[
\left(A,v,p_{v}\right)\rightarrow\left(A,v,p'_{v}\right),
\]
\textcolor{black}{with }$\psi=\psi^{\left(1\right)}$\textcolor{black}{.
Since 
\[
p_{v}=\langle E_{v}\psi,\psi\rangle=\langle M_{v}\psi,M_{v}\psi\rangle=\Vert M_{v}\psi\Vert^{2}>0,
\]
the next state vector (interjecting the unitary evolution operator
$U_{\Delta}$) is 
\[
\psi_{\Delta}^{\left(2\right)}=U_{\Delta}\psi^{\left(2\right)}=U_{\Delta}\frac{M_{v}\psi}{\Vert M_{v}\psi\Vert}=\frac{U_{\Delta}M_{v}\psi}{\Vert M_{v}\psi\Vert}.
\]
Since $\Vert\psi_{\Delta}^{\left(2\right)}\Vert=1$, it follows from
Lemma \ref{lemma simple} below that the equality $p'_{v}=\langle E_{v}\psi_{\Delta}^{\left(2\right)},\psi_{\Delta}^{\left(2\right)}\rangle=1$
implies $E_{v}\psi_{\Delta}^{\left(2\right)}=\psi_{\Delta}^{\left(2\right)}$,
or 
\[
E_{v}U_{\Delta}M_{v}\psi=U_{\Delta}M_{v}\psi.
\]
}By the stability considerations (Lemma \ref{lem: stability} below,
with $X_{1}=E_{v}$ to guarantee $p_{v}>0$, and $Y=E_{v}U_{\Delta}M_{v},Z=U_{\Delta}M_{v}$)\textcolor{black}{,
\[
E_{v}\left(U_{\Delta}M_{v}\right)=\left(U_{\Delta}M_{v}\right).
\]
Since 
\[
E_{v}=M_{v}^{\dagger}M_{v}=M_{v}^{\dagger}U_{\Delta}^{\dagger}U_{\Delta}M_{v}=\left(U_{\Delta}M_{v}\right)^{\dagger}\left(U_{\Delta}M_{v}\right),
\]
we can apply Lemma \ref{lemma even simpler} below (with $E_{v}=E$
and $U_{\Delta}M_{v}=S$) to establish that all eigenvalues of $E_{v}$
are 0's and 1's. Therefore $E_{v}$ is an orthogonal projector operator.
A POVM $\left(E_{0},E_{1}\right)$ with both components orthogonal
projectors is a conventional observable.}\end{proof}
\begin{lem}
\label{lem: inefficient}For a unitary operator $U$ and an observable
$A$ with eigenprojectors $P_{0},P_{1}$, if $U$ and $A$ commute
then $U^{\dagger}P_{v}U=P_{v}$ for $v=0$ and $v=1$; and if $U^{\dagger}P_{v}U=P_{v}$
for either $v=0$ or $v=1$, then $U$ and $A$ commute.\end{lem}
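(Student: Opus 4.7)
\bigskip

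\noindent\textbf{Proof plan for Lemma \ref{lem: inefficient}.}

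The plan is to reduce everything to an algebraic identity using two simple facts: that for a binary observable $A = 0\cdot P_0 + 1\cdot P_1 = P_1$, so commuting with $A$ is the same as commuting with $P_1$; and that $P_0 + P_1 = I$, so commuting with $P_1$ is the same as commuting with $P_0$. Together these show that ``$U$ commutes with $A$'' is equivalent to ``$U$ commutes with $P_v$'' for \emph{each} (equivalently, either) value $v \in \{0,1\}$. Once this reduction is in place, unitarity $U^\dagger U = U U^\dagger = I$ converts the commutation relation $U P_v = P_v U$ into $U^\dagger P_v U = P_v$ by a one-line multiplication on either side, and vice versa.

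For the forward direction I would assume $UA = AU$, rewrite this as $U P_1 = P_1 U$, and left-multiply by $U^\dagger$ to obtain $P_1 = U^\dagger P_1 U$. For $v=0$ I would then write
\[
U^\dagger P_0 U \;=\; U^\dagger(I - P_1)U \;=\; U^\dagger U - U^\dagger P_1 U \;=\; I - P_1 \;=\; P_0,
\]
so $U^\dagger P_v U = P_v$ holds for both $v=0$ and $v=1$.

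For the converse direction I would assume $U^\dagger P_v U = P_v$ for a single chosen $v \in \{0,1\}$, left-multiply by $U$, and use $UU^\dagger = I$ to obtain $P_v U = U P_v$. Since $P_{1-v} = I - P_v$ and $U$ commutes trivially with $I$, it also commutes with $P_{1-v}$, and hence with $A = P_1$. Thus $UA = AU$.

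No real obstacle is expected: the lemma is a packaging of the elementary observation that the commutation relation $UA = AU$ can be pre- or post-multiplied by $U^\dagger$ thanks to unitarity, and that in the binary-observable setting $A$, $P_0$ and $P_1$ all generate the same commutant. The one small point to be careful about is making explicit that the hypothesis $U^\dagger P_v U = P_v$ for \emph{either} index $v$ suffices, which is handled precisely by the $P_0 + P_1 = I$ step above.
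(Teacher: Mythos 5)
Your proposal is correct and follows essentially the same route as the paper's own proof: reduce commutation with $A$ to commutation with $P_1$ via $A=P_1$, use unitarity to pass between $UP_vU^{\dagger}$-conjugation and the commutation relation, and use $P_0+P_1=I$ to transfer the conclusion between the two indices. The algebraic steps match the paper's almost line for line, so nothing further is needed.
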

\begin{proof}
$U$ and $A$ commute if and only if $U$ commutes with $P_{1}$ (associated
with eigenvalue 1), because $A=P_{1}$. We should prove therefore
that
\[
UP_{1}=P_{1}U\Longleftrightarrow U^{\dagger}P_{1}U=P_{1}\Longleftrightarrow U^{\dagger}P_{0}U=P_{0}.
\]
For the first equivalence, if $UP_{1}=P_{1}U$, then $U^{\dagger}P_{1}U=U^{\dagger}UP_{1}=P_{1}$;
conversely, if $U^{\dagger}P_{1}U=P_{1}$, then $UP_{1}=UU^{\dagger}P_{1}U=P_{1}U$.
For the second equivalence, if $U^{\dagger}P_{v}U=P_{v}$, then $U^{\dagger}P_{1-v}U=U^{\dagger}\left(I-P_{v}\right)U=I-U^{\dagger}P_{v}U=I-P_{v}=P_{1-v}$.\end{proof}
\begin{lem}
\label{lem: stability}Let $X_{1},\ldots,X_{n},Y,Z$ be some matrices.
The statement
\[
\begin{array}{c}
\langle X_{1}\psi,\psi\rangle>0,\ldots,\,\langle X_{n}\psi,\psi\rangle>0\\
\Downarrow\\
Y\psi=Z\psi
\end{array}
\]
holds stable for $\psi$ if and only if $Y=Z$.\end{lem}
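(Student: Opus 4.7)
The plan is to handle the two directions separately, with nearly all the work falling on the $\Rightarrow$ direction. The $\Leftarrow$ direction is immediate: if $Y = Z$, then $Y\phi = Z\phi$ for every vector $\phi$, so the implication holds vacuously regardless of its premise, and in particular it holds on every ball centered at $\psi$.

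For the $\Rightarrow$ direction I would exploit two basic facts: (i) the premise $\langle X_{i}\phi,\phi\rangle > 0$ is an \emph{open} condition on $\phi$, and (ii) a linear operator on a finite-dimensional Hilbert space is determined by its values on any nonempty open subset of vectors. In every application of the lemma the $X_i$ are chosen precisely so that the positivity conditions actually hold at $\psi$ itself; this is the tacit standing assumption (otherwise the implication is vacuous and tells us nothing). Granted that, continuity of $\phi \mapsto \langle X_i\phi,\phi\rangle$ ensures the premise persists on some open ball around $\psi$, and by the Stability Principle I may shrink $B_r(\psi)$ so that both the premise and the implication hold at every normalized $\phi = (\psi+\delta)/\|\psi+\delta\|$ with $\|\delta\| < r$.

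The implication then forces $Y\phi = Z\phi$ for every such $\phi$. Using linearity of $Y-Z$ to clear the positive scalar $\|\psi+\delta\|$, I obtain $(Y-Z)(\psi+\delta) = 0$ for every $\delta$ in the open ball. Setting $\delta = 0$ yields $(Y-Z)\psi = 0$, and subtracting leaves $(Y-Z)\delta = 0$ for every $\delta$ with $\|\delta\| < r$. Since these $\delta$'s span the whole ambient Hilbert space, linearity extends this to $(Y-Z)v = 0$ for all $v$, whence $Y = Z$.

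I do not anticipate a genuine obstacle. The one point that requires a little care is the normalization factor $\|\psi+\delta\|$, which separates the stability hypothesis (phrased in terms of unit vectors) from the linear-algebraic conclusion we want (phrased for arbitrary vectors); but this factor is a positive scalar and is absorbed by linearity. Beyond that, the proof is simply the standard rigidity principle: a linear map is pinned down by its values on any open patch of its domain.
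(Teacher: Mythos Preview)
Your proposal is correct and follows essentially the same route as the paper's proof: invoke continuity so the positivity premises persist on a small ball, use stability to get $Y\overline{\psi}=Z\overline{\psi}$ for every normalized $\overline{\psi}=(\psi+\delta)/\|\psi+\delta\|$, clear the positive scalar by linearity, and conclude $Y=Z$ because the perturbations $\delta$ span the ambient space. If anything, you are slightly more explicit than the paper in isolating the step $(Y-Z)\psi=0$ at $\delta=0$ before subtracting to obtain $(Y-Z)\delta=0$.
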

\begin{proof}
The ``if'' part is trivial. For the ``only if'' part, by the definition
of stability the initial state can be chosen as \textcolor{black}{
\[
\overline{\psi}=\frac{\psi+\delta}{\left\Vert \psi+\delta\right\Vert },
\]
}where $\delta\in B_{r}\left(0\right)$ (open ball of radius $r$
centered at 0). By continuity considerations, $r$ can be chosen sufficiently
small for $\langle X_{1}\psi,\psi\rangle,\ldots,\,\langle X_{n}\psi,\psi\rangle$
to remain positive. But $Y\overline{\psi}=Z\overline{\psi}$ any such
$\overline{\psi}$, and we have \textcolor{black}{
\[
Y\frac{\psi+\delta}{\left\Vert \psi+\delta\right\Vert }=Z\frac{\psi+\delta}{\left\Vert \psi+\delta\right\Vert },
\]
}whence 
\[
Y\delta=Z\delta.
\]
Since every vector is collinear to some $\delta$, $Y$ and $Z$ coincide. \end{proof}
\begin{lem}
\label{lemma simple}\textcolor{black}{Let $\left(E_{1},E_{2}\right)$
be a POVM}. If $\langle E_{v}\phi,\phi\rangle=1$ and $\Vert\phi\Vert=1$,
then $E_{v}\phi=\phi$.\end{lem}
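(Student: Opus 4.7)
The plan is to exploit two facts simultaneously: the POVM completeness relation $E_0 + E_1 = I$, and the positive semidefiniteness of each $E_v$. The first lets us transfer the hypothesis $\langle E_v\phi,\phi\rangle = 1$ into a statement about the \emph{other} POVM component, and the second lets us conclude that this other component annihilates $\phi$.

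Concretely, I would first compute
\begin{equation*}
\langle E_{1-v}\phi,\phi\rangle \;=\; \langle (I-E_v)\phi,\phi\rangle \;=\; \langle \phi,\phi\rangle - \langle E_v\phi,\phi\rangle \;=\; 1 - 1 \;=\; 0,
\end{equation*}
using $\Vert\phi\Vert = 1$. Next, since $E_{1-v}$ is positive semidefinite, it admits a factorization $E_{1-v} = M_{1-v}^{\dagger}M_{1-v}$ (as already noted in the paper when POVMs were introduced). Therefore
\begin{equation*}
0 \;=\; \langle E_{1-v}\phi,\phi\rangle \;=\; \langle M_{1-v}^{\dagger}M_{1-v}\phi,\phi\rangle \;=\; \Vert M_{1-v}\phi\Vert^{2},
\end{equation*}
which forces $M_{1-v}\phi = 0$, and hence $E_{1-v}\phi = M_{1-v}^{\dagger}M_{1-v}\phi = 0$.

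Finally, applying $E_0 + E_1 = I$ once more gives $E_v\phi = (I - E_{1-v})\phi = \phi$, as desired. There is essentially no obstacle here: the only ingredient beyond pure algebra is the standard fact that a positive semidefinite operator has a ``square root'' factorization $M^{\dagger}M$, which is invoked precisely to pass from a vanishing quadratic form $\langle E_{1-v}\phi,\phi\rangle = 0$ to the vanishing of $E_{1-v}\phi$ itself. (Equivalently, one could diagonalize $E_{1-v}$ and observe that $\phi$ must lie in the kernel.) Notably, the argument does not require the stability principle, which is why the present lemma, unlike Proposition \ref{propo: POVM}, holds for an arbitrary single state $\phi$.
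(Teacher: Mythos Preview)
Your proof is correct, and it is cleaner than the paper's own argument. Both proofs hinge on the positive semidefiniteness of the complementary component $E_{1-v}$, but they invoke it differently. The paper decomposes $E_v\phi=\phi+\gamma$ with $\gamma\perp\phi$, writes $E_v$ and $E_{1-v}=I-E_v$ explicitly in a basis with $e_1=\phi$, and then shows that if $\gamma\neq 0$ the leading $2\times 2$ principal minor of $E_{1-v}$ is negative, a contradiction. You instead observe directly that $\langle E_{1-v}\phi,\phi\rangle=0$ and use the factorization $E_{1-v}=M_{1-v}^{\dagger}M_{1-v}$ to conclude $E_{1-v}\phi=0$. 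Your route is shorter and avoids any choice of basis or matrix computation; the paper's route is more hands-on but makes the obstruction visible as an explicit minor. Either way the content is the same standard fact: for a positive semidefinite $E$, $\langle E\phi,\phi\rangle=0$ forces $E\phi=0$.
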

\begin{proof}
Writing $E_{v}\phi=c\phi+\gamma$, with $\gamma\bot\phi$, we see
that $\langle E_{v}\phi,\phi\rangle=1$ implies $c=1$. Choose an
orthonormal basis $\left(e_{1},\ldots,e_{n}\right)$ in the Hilbert
space so that $\phi=e_{1}$. In this basis $\gamma=\sum_{i=2}^{n}u_{i}e_{i}$.
Assume that $\gamma\not=0$, and let, with no loss of generality,
$u_{2}\not=0$. The components $E_{v}$ in this basis is 
\[
E_{v}=\left(\begin{array}{cccc}
1 & \overline{u_{2}} & \ldots & \overline{u_{n}}\\
u_{2} & \ldots & \ldots & \ldots\\
\vdots & \vdots & \ddots & \vdots\\
u_{n} & \ldots & \ldots & \ldots
\end{array}\right),
\]
because when multiplied by $\phi=e_{1}=\left(1,0,\ldots,0\right)^{\top}$,
it should yield $e_{1}+\sum_{i=2}^{n}u_{i}e_{i}=\phi+\gamma$. Then
the other component in this basis is 
\[
E_{1-v}=\left(\begin{array}{cccc}
0 & -\overline{u_{2}} & \ldots & -\overline{u_{n}}\\
-u_{2} & \ldots & \ldots & \ldots\\
\vdots & \vdots & \ddots & \vdots\\
-u'_{n} & \ldots & \ldots & \ldots
\end{array}\right),
\]
because $E_{v}+E_{1-v}=I$. But with $u_{2}\not=0$, the leading principal
minor 
\[
\left|\begin{array}{cc}
0 & -\overline{u{}_{2}}\\
-u{}_{2} & \ldots
\end{array}\right|<0,
\]
which contradicts the requirement that $E_{1-v}$ be positive semidefinite.
This contradiction shows that $\gamma=0$.\end{proof}
\begin{lem}
\textcolor{black}{\label{lemma even simpler}Let $E=S^{\dagger}S$
be a component of a POVM, and let $ES=S$. Then all eigenvalues of
$E$ are 0's and 1's.}\end{lem}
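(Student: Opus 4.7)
The plan is to rule out eigenvalues of $E$ strictly between $0$ and $1$ by exploiting the interaction between $E=S^{\dagger}S$, the hypothesis $ES=S$, and the fact that $S$ must behave as an isometry on the $1$-eigenspace of $E$.

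First I would fix a unit eigenvector $v$ of $E$ with $Ev=\lambda v$ and observe that $\|Sv\|^{2}=\langle S^{\dagger}Sv,v\rangle=\langle Ev,v\rangle=\lambda$, so $\lambda\geq 0$; the POVM bound $E\leq I$ further gives $\lambda\in[0,1]$. If $\lambda=0$ there is nothing to prove, so assume $\lambda>0$ and set $u:=Sv\neq 0$. Then $ES=S$ immediately yields $Eu=u$, so $u$ lies in the $1$-eigenspace $V_{1}$ of $E$.

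The heart of the argument is to compute $S^{\dagger}u$ in two different ways. Directly, $S^{\dagger}u=S^{\dagger}Sv=Ev=\lambda v$. To obtain a second, independent expression, I would show that $S$ restricted to $V_{1}$ is a bijection of $V_{1}$ onto itself: for any $y\in V_{1}$ we have $\|Sy\|^{2}=\langle Ey,y\rangle=\|y\|^{2}$, so $S$ acts isometrically on $V_{1}$, and since $ES=S$ also forces $Sy\in V_{1}$, finite-dimensionality makes this isometry surjective. Hence there is some $w\in V_{1}$ with $Sw=u$, and consequently $S^{\dagger}u=S^{\dagger}Sw=Ew=w$.

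Combining the two expressions gives $w=\lambda v$ with $w\in V_{1}$ and $v\in V_{\lambda}$. Because $V_{1}$ and $V_{\lambda}$ are eigenspaces of the Hermitian operator $E$ associated with distinct eigenvalues whenever $\lambda\neq 1$, they are orthogonal and meet only in $0$; but $\lambda v\neq 0$, so we are forced to conclude $\lambda=1$. Thus the only possible eigenvalues are $0$ and $1$. I expect the main obstacle to be producing the second expression for $S^{\dagger}u$, since this requires pulling $u$ back into $V_{1}$ along $S$; the isometry observation on $V_{1}$ resolves this in one line once stated.
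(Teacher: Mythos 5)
Your proof is correct, and it takes a genuinely different route from the paper's. The paper argues in coordinates: diagonalize $E$ in an eigenbasis, note that $ES=S$ forces every row of $S$ indexed by an eigenvalue $\lambda_{i}\neq1$ to vanish, and then read $\lambda_{i}$ off the diagonal of $S^{\dagger}S$. Your argument is coordinate-free: you send the eigenvector $v$ into the $1$-eigenspace via $u=Sv$, show that $S$ restricts to an isometry of $V_{1}$ onto itself (sound, since an injective linear self-map of a finite-dimensional space is surjective), and compute $S^{\dagger}u$ in two ways to land the nonzero vector $\lambda v$ in $V_{1}\cap V_{\lambda}$, which is impossible for $\lambda\notin\{0,1\}$ by orthogonality of distinct eigenspaces of the Hermitian operator $E$. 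Every step checks out ($\Vert Sv\Vert^{2}=\langle Ev,v\rangle=\lambda$; $E(Sv)=Sv$; $S^{\dagger}Sv=\lambda v$ versus $S^{\dagger}Sw=Ew=w$), and the POVM bound $\lambda\leq1$ you invoke at the start is never actually used. What the paper's route buys is brevity; what yours buys is robustness: the printed proof identifies the diagonal entry of $S^{\dagger}S$ with the squared norm of a \emph{row} of $S$, when it is in fact the squared norm of a \emph{column} (the row/column roles would match for $SS^{\dagger}$), so as written it needs a small repair --- for instance, observing that the columns of $S$ are pairwise orthogonal with squared norms $\lambda_{i}$ and are all supported on the $\dim V_{1}$ rows having $\lambda_{k}=1$, so at most $\dim V_{1}$ of them can be nonzero. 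Your isometry-plus-surjectivity argument sidesteps that issue entirely.
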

\begin{proof}
\textcolor{black}{Since $E$ is Hermitian, we can select a basis consisting
of its eigenvectors. In this basis matrix $E$ is diagonal with the
diagonal elements $\lambda_{1},...,\lambda_{n}$ (the eigenvalues
of $E$). Suppose that one of these elements, say $\lambda_{1}$,
is not 1. From $ES=S$ it follows that $\lambda_{1}s_{1}=s_{1}$,
where $s_{1}$ is the first row of $S$. Therefore $s_{1}$ of $S$
consists of zeros. But since $E=S^{\dagger}S$, we have $\lambda_{1}=\langle s_{1},s_{1}\rangle=0$.
This proves that $\lambda_{1},...,\lambda_{n}$ consists of 0's and
1's.}\end{proof}


\begin{thebibliography}{10}


\bibitem{AS2} Accardi, L., Khrennikov, A., Ohya, M. 2008 The problem
of quantum-like representation in economy, cognitive science, and
genetics. In.: \textit{Quantum Bio-Informatics II: From Quantum Information
to Bio-Informatics} (ed. L. Accardi, W. Freudenberg, M. Ohya), pp. 1-8.
WSP, Singapore.

\bibitem{K7} Khrennikov, A. 2010 \textit{Ubiquitous quantum structure: from
psychology to finance}, Springer, Heidelberg-Berlin-New York.

\bibitem{AS5} Asano, M., Ohya, M., Tanaka, Y., Khrennikov, A. and Basieva,
I. 2010 On application of Gorini-Kossakowski-Sudarshan-Lindblad
equation in cognitive psychology. \textit{Open Systems and Information Dynamics},
\textbf{17}, 1-15.

\bibitem{AS8} Basieva, I., Khrennikov, A., Ohya, M. and Yamato, I. 2010
Quantum-like interference effect in gene expression glucose-lactose
destructive interference. \textit{Syst. and Synth. Biology}, 1-10.

\bibitem{H2} Haven, E.  and Khrennikov, A. 2012  \textit{Quantum Social Science,
Cambridge Press}.


\bibitem{POLINA} Khrennikova, P.,  Haven, E., and Khrennikov, A. 2013
An application of the theory of open quantum systems to model the
dynamics of party governance in the US Political System. \textit{International
Journal of Theoretical Physics}. (DOI 10.1007/s10773-013-1931-6.)












\bibitem{K2} Khrennikov, A. 2004 On quantum-like probabilistic structure
of mental information. \textit{Open Systems and Information Dynamics}, \textbf{11 (3)},
267-275.

\bibitem{Jerome1} Busemeyer, J. B., Wang, Z. and Townsend, J. T.
2006 Quantum dynamics of human decision making, \textit{J. Math.
Psychology} \textbf{50}, pp. 220-241.

\bibitem{Conte1} Conte, E., Todarello, O., Federici, A., Vitiello,
F., Lopane, M., Khrennikov, A. and Zbilut, J. P. 2006 Some remarks
on an experiment suggesting quantum-like behavior of cognitive entities
and formulation of an abstract quantum mechanical formalism to describe
cognitive entity and its dynamics, \textit{Chaos, Solitons and Fractals}
\textbf{31}, pp. 1076-1088.

\bibitem{L} Fichtner, K.H., Fichtner, L., Freudenberg, W., and Ohya, M. 2008
On a quantum model of the recognition process. In.: \textit{ QP-PQ:Quantum Prob.
White Noise Analysis} 21, pp. 64-84. WSP, Singapore.

\bibitem{AS1} Acacio de Barros, J., Suppes, P. 2009 Quantum mechanics,
interference, and the brain. \textit{J. Math. Psych}, \textbf{53}, 306-313.

\bibitem{Jerome4} Busemeyer, J. R., Wang, Z., Lambert-Mogiliansky, A. 2009
Empirical comparison of Markov and quantum models of decision making.
\textit{Journal of Mathematical Psychology}, \textbf{53 (5)}, 423-433.

\bibitem{K5} Khrennikov, A. 2009 Quantum-like model of cognitive
decision making and information processing. \textit{Biosystems}, \textbf{95}, 179-187.

\bibitem{POT} Pothos E. M. , Busemeyer J. R. 2009 A quantum probability
explanation for violation of rational decision theory \textit{Proc. Royal.
Soc. B}, \textbf{276}, 2171-2178.

\bibitem{Jerome4a} Busemeyer, J. R., Pothos, E. M., Franco, R. and
Trueblood, J. 2011 A quantum theoretical explanation for probability judgment
errors. \textit{Psychological Review}, \textbf{118}, 193-218.

\bibitem{AS7} Asano, M., Ohya, M., Tanaka, Y., Khrennikov, A. and Basieva,
I. 2011 Dynamics of entropy in quantum-like model of decision making.
\textit{J. Theor. Biology}, \textbf{281}, 56-64.

\bibitem{Jerome5} Busemeyer, J. R.  and Bruza, P. D. 2012 \textit{Quantum models
of cognition and decision}. Cambridge Press.

\bibitem{ADB} Acacio de Barros, J. 2012 Joint probabilities and quantum
cognition, In: \textit{Quantum Theory: Reconsiderations of Foundations 6,}
(ed. A. Khrennikov, H. Atmanspacher, A. Migdall and S. Polyakov),
Special Section: Quantum-like decision making: from biology to behavioral
economics, \textit{AIP Conf. Proc.}, \textbf{1508}, pp. 98-104.

\bibitem{ATM0} Atmanspacher, H., Filk, Th., 2012 Temporal nonlocality in bistable
perception. In: \textit{Quantum Theory: Reconsiderations of Foundations - 6}
(ed. A. Khrennikov, H. Atmanspacher, A. Migdall and S. Polyakov),
Special Section: Quantum-like decision making: from biology to behavioral
economics, \textit{AIP Conf. Proc.}, \textbf{1508}, 79-88.

\bibitem{Dzhafarov1} Dzhafarov, E.N. \& Kujala, J.V. 2012 Selectivity
in probabilistic causality: Where psychology runs into quantum physics.
\textit{Journal of Mathematical Psychology}, \textbf{56}, 54-63.

\bibitem{Dzhafarov2} Dzhafarov, E.N. \& Kujala, J.V. 2012 Quantum
entanglement and the issue of selective influences in psychology:
An overview. \textit{Lecture Notes in Computer Science}, \textbf{7620}, 184-195.


\bibitem{POTZ} Pothos, E. M., Busemeyer, J. R. 2013 Can quantum
probability provide a new direction for cognitive modeling? \textit{Behavioral
and Brain Sciences}, \textbf{36}, 255-274.


\bibitem{Dzhafarov3} Dzhafarov, E.N., \& Kujala, J.V. 2013 All-possible-couplings
approach to measuring probabilistic context. \textit{PLoS ONE}, \textbf{8(5): e61712}.
(DOI:10.1371/journal.pone.0061712).

\bibitem{WANG} Wang, Zh., Busemeyer, J.R. 2013 A quantum question
order model supported by empirical tests of an a priori and precise
prediction. \textit{Topics in Cognitive Sciences}, \textbf{5}, 689-710.

\bibitem{Dzhafarov4} Dzhafarov, E.N., \& Kujala, J.V. 2014 On selective
influences, marginal selectivity, and Bell/CHSH inequalities. \textit{Topics
in Cognitive Science}, \textbf{6}, 121-128.


\bibitem{WOO} Woodworth, R.S. 1921 \textit{Psychology: A Study of Mental
Life}. New York: H. Holt.

\bibitem{MOO} Moore, D. W. 2002 Measuring new types of question-order
effects. \textit{Public Opinion Quarterly}, \textbf{66}, 80-91.

\bibitem{ACK1} Atkinson, R., Carterette, E., Kinchla, R. 1962 Sequential
phenomena in psychophysical judgments: A theoretical analysis. \textit{IRE
Transactions on Information Theory}, \textbf{8}, 155-162.

\bibitem{BUS} Busch, P., Grabowski, M., Lahti, P. 1995 \textit{Operational
Quantum Physics}. Springer Verlag, Berlin.

\bibitem{DMU} De Muynck, W. M. 2002 \textit{Foundations of Quantum
Mechanics, an Empiricists Approach}. Kluwer, Dordrecht.

\bibitem{NEU} Gelfand, I. M.  and Neumark, M. A. 1943 On the embedding of
normed rings into the ring of operators in Hilbert space. \textit{Rec. Math.
{[} Matematicheski Sbornik{]} N.S.}, \textbf{12(54)}, 197-213.

\bibitem{REG} Regenwetter, M., Dana, J., and Davis-Stober, C. 2011
Transitivity of preferences. \textit{Psychological Review}, \textbf{118}, 42-56.


\end{thebibliography}
\end{document}